\DeclareFontFamily{U}{cbgreek}{}
\DeclareFontShape{U}{cbgreek}{m}{n}{
        <-6>    grmn0500
        <6-7>   grmn0600
        <7-8>   grmn0700
        <8-9>   grmn0800
        <9-10>  grmn0900
        <10-12> grmn1000
        <12-17> grmn1200
        <17->   grmn1728
      }{}
\DeclareFontShape{U}{cbgreek}{bx}{n}{
        <-6>    grxn0500
        <6-7>   grxn0600
        <7-8>   grxn0700
        <8-9>   grxn0800
        <9-10>  grxn0900
        <10-12> grxn1000
        <12-17> grxn1200
        <17->   grxn1728
      }{}
\newcommand{\normalorbold}{%
  \ifnum\pdf@strcmp{\math@version}{bold}=\z@ bx\else m\fi
}
\newtheorem{theorem}{Theorem}
\newtheorem{athm}{Theorem}[section]
\newtheorem{propm}[theorem]{Proposition}
\newtheorem{corom}[theorem]{Corollary}
\newtheorem{obs}[athm]{Observation}
\newtheorem{prop}[athm]{Proposition}
\newtheorem{alem}[athm]{Lemma}
\newtheorem{coro}[athm]{Corollary}
\newtheorem*{definition}{Definition}
\newcommand*{\eins}{\ensuremath{\mathbbm 1}}
\def\gbm#1{{\let\lambda\uplambda \let\mu\upmu \let\rho\uprho \let\sigma\upsigma \let\tau\uptau \let\eta\upeta \bm{#1}}}
\newcommand\ltM{\mathrel{\rhd}}
\newcommand\tho{\mathrel{\stackrel{\makebox[0pt]{\mbox{\normalfont\tiny{TO}}}}{\mapsto}}}
\newcommand\ct{\mathrel{\stackrel{\makebox[0pt]{\mbox{\normalfont\tiny{CTO}}}}{\longmapsto}}}
\newcommand*{\bbR}{\mathbb{R}}
\newcommand*{\bbN}{\mathbb{N}}
\newcommand*{\cE}{\mathcal{E}}
\newcommand*{\cL}{\mathcal{L}}
\newcommand*{\cM}{\mathcal{M}}
\newcommand*{\cO}{\mathcal{O}}
\newcommand*{\cR}{\mathcal{R}}
\newcommand*{\cS}{\mathcal{S}}
\newcommand*{\cT}{\mathcal{T}}
\newcommand*{\rA}{\mathrm{A}}
\newcommand*{\rC}{\mathrm{C}}
\newcommand*{\rS}{\mathrm{S}}
\newcommand*{\rX}{\mathrm{X}}
\newcommand*{\rY}{\mathrm{Y}}
\newcommand*{\ket}[1]{\left|#1\right\rangle}
\newcommand*{\bra}[1]{\left\langle #1\right|}
\newcommand*{\proj}[1]{\ket{#1}\bra{#1}}
\newcommand*{\Tr}{\mathrm{Tr}}
\newcommand*{\fr}[2]{\frac{#1}{#2}}
\newcommand*{\tp}{^\mathsf{T}}
\newcommand{\vect}[1]{\mathbf{#1}}
\newcommand{\be}{\begin{equation}}
\newcommand{\ee}{\end{equation}}
\newcommand{\n}{\textendash}
\begin{document}
\title{The resource theory under conditioned thermal operations}
\date{\today}
\author{Varun Narasimhachar}
\email{vnarasim@ucalgary.ca}
\affiliation{Institute for Quantum Science and Technology and Department of Mathematics and Statistics, University of Calgary, 2500 University Drive NW, Calgary, Alberta, Canada T2N 1N4}
\affiliation{School of Physical and Mathematical Sciences, Nanyang Technological University, SPMS-04-01, 21 Nanyang Link, Singapore 637371}
\author{Gilad Gour}
\affiliation{Institute for Quantum Science and Technology and Department of Mathematics and Statistics, University of Calgary, 2500 University Drive NW, Calgary, Alberta, Canada T2N 1N4}
%
%The ``thermal operations'' (TO) framework developed in past works is used to model the evolution of microscopic quantum systems in contact with thermal baths. Here we extend the TO model to describe hybrid devices consisting of quantum systems controlled by macroscopic or otherwise classical elements external to the system-bath setup. We define the operations of such hybrid devices as conditioned thermal operations (CTO), and their deviation from thermal equilibrium as conditional athermality. We develop the resource theory of conditional athermality, including its characterization, quantification, and manipulation. We find that our formalism unifies those of thermo-majorization (M. Horodecki and J. Oppenheim, Nat. Commun. 4, 2013) and conditional majorization (G. Gour et al., arXiv:1506.07124, 2015). After introducing conditional athermality monotones, we find necessary and sufficient conditions for single-copy resource conversion, as well as the asymptotic rate of many-copy conversion. As an application, we derive the conditions for probabilistic conversion of athermality, and find them to be analogous to results on stochastic entanglement manipulation (D. Jonathan and M. Plenio, Phys. Rev. Lett., 83:1455, 1999).

\begin{abstract}
The ``thermal operations'' framework developed in past works is used to model the evolution of microscopic quantum systems in contact with thermal baths. Here we extend this model to bipartite devices with one part acting as a control external to the system--bath setup. We define the operations of such hybrid devices as conditioned thermal operations. We examine the resource under these operations, which we call conditional athermality. In the quasiclassical limit, we quantify this resource and find the conditions for its conversion between different forms.
\end{abstract}

\maketitle

\section{Introduction}
In quantum information theory, the unconstrained dynamics of a physical system are mathematically modeled as completely{\hyp}positive trace{\hyp}preserving (CPTP) maps. The actual dynamics may be constrained, e.g.\ by symmetries of the Hamiltonian or practical limitations. Nevertheless, access to systems prepared in some special ``resource'' states can help lift restrictions. Associated with each restricted class of dynamics is the resource that lifts it: entanglement for local operations, reference frames for symmetric dynamics, etc. A \emph{resource theory} (e.g.\ \cite{Asymy9,Reth}) is a formal study of a particular resource, where only a restricted class of operations is ``allowed'' and others forbidden. A ``free state'' is one that can be prepared from scratch using the allowed operations; any non{\hyp}free state is a resource.

Recently, a resource{\hyp}theoretic approach has been taken to non{\hyp}equilibrium thermodynamics of microscopic systems, defining \emph{thermal operations} to model a system's thermal contact with an ideal bath \cite{Nan,Reth,CohRE,catcoh,LJR15,Coh,CST,NG15,Sec,NU,LMP15,FB15,FOR15,AOP15,JA16}. The resource relative to this class of operations is thermal inequilibrium, or \emph{athermality}. In this paper, we define a generalization, \emph{conditioned thermal operations}, wherein the main system undergoes thermal operations \emph{conditioned upon} the state of a control system. We study the theory of the associated resource, which we call \emph{conditional athermality}.

After defining conditional thermal operations in their full generality, we focus on the limiting case where the control system is classical and the main system is quasiclassical (cf.\ Ref.~\cite{Nan,Reth,NU}). This limit is of practical relevance, describing a situation where a microscopic (``quantum'') system in a thermal environment is controlled using macroscopic (``classical'') circuitry. We develop the conditional athermality theory thoroughly in this limit. We first present a method to construct a large class of resourcefulness measures called monotones. Building on the elegant Lorenz curve construction \cite{NU}, we prove that a certain family of monotones within this class provides necessary and sufficient conditions to determine resource convertibility. We also formulate the convertibility problem as an efficiently computable linear program. We then consider large numbers of copies, and find that in the asymptotic limit, all resources are reversiby interconvertible at a rate given by an averaged version of the well-known free energy function. Finally, we discuss the many prospects that lie ahead in the resource theories of athermality and conditional athermality.

\section{Review: thermal operations}
Consider a $d${\hyp}level system S, and let $H$ denote its free Hamiltonian. A \emph{thermal operation} (TO) \cite{Nan} is a state transformation on S effected by (1) introducing an ancilla A, with arbitrarily{\hyp}chosen free Hamiltonian $H_\rA$, prepared in its Gibbs (or thermal) state $\gamma_\rA:=\exp\left(-\beta H_\rA\right)/\Tr\exp\left(-\beta H_\rA\right)$; (2) acting on the combined system SA with a unitary $U_{\rS\rA}$ satisfying $\left[U_{\rS\rA},H+H_\rA\right]=0$ (energy conservation); (3) discarding A. The resulting TO is described by the CPTP map $\cT:\rho\mapsto\Tr_\rA\left[U_{\rS\rA}\left(\rho\otimes\gamma_\rA\right)U_{\rS\rA}^\dagger\right]$, where $\rho$ is an arbitrary initial state of S.

In the resource theory whose allowed operations are TO, the only free state is the thermal state, or Gibbs state, given by
\begin{align}
\gamma:=&\exp\left(-\beta H\right)/\Tr\exp\left(-\beta H\right)\nonumber\\
=&\fr1{Z_\rS}\sum_j\exp\left(-\beta E_j\right)\proj{E_j},
\end{align}
where the $E_j$ are the energy levels (i.e., the eigenvalues of $H$) and $Z_\rS=\sum_j\exp\left(-\beta E_j\right)$ is the evaluation of the system's partition function at temperature $\beta^{-1}$. Deviation from this free state, called athermality, is a resource, as evidenced by its utility in work extraction, refrigeration, and erasure \cite{smarf,Aab,SSP14,CorW,KLOJ15}. The effect of a TO on any state is to bring it closer to $\gamma$. With this background, we are now ready to present our new work.

\section{Conditioned thermal operations}
We now consider a bipartite device consisting of a ``system'' S and a ``control'' C. We define the following:
\begin{definition} A \emph{conditioned thermal operation} (CTO) on the composite $\rS\rC$ is a transformation given by
\be
\cE:\rho_{\rS\rC}\mapsto\sum_{j=1}^n\cT^{(j)}\otimes\cR^j\left(\rho_{\rS\rC}\right),
\ee
where each $\cT^{(j)}$ is a TO and each $\cR^j$ a CP map such that $\sum_j\cR^j$ is TP.
\end{definition}
Note that any CTO belongs to the class of local operations and classical communication (LOCC) with respect to the S\n C partition. If we marginalize over C, the effective transformation of the state of S under a CTO appears like a TO. But note that each of the various conditional TO's $\cT^{(j)}$ acts not on the average marginal state $\rho_\rS$, but on the marginal state that results when the map $\mathrm{id}\otimes\cR^j$ is applied to $\rho_{\rS\rC}$. Therefore, the effective transformation of S is a TO only when $\rho_{\rS\rC}$ is a product state.

\subsection{Free states and resources}
In contrast with TO, there exist an infinite family of free states under CTO, consisting of all states of the form $\gamma\otimes\rho_\rC$, with $\rho_\rC$ arbitrary. If, on the other hand, we consider a wider class of states, satisfying $\Tr_\rC\rho_{\rS\rC}=\gamma$ for some $\rho_{\rS\rC}$, then it is possible that $\Tr_\rC\cE\left(\rho_{\rS\rC}\right)\ne\gamma$ for some CTO's $\cE$. The set of all such ``locally thermal'' states clearly includes all the free states, but also some resource states: those which are locally thermal on S, but contain S\n C correlations. These resource states do not contain any athermality on S relative to uncorrelated systems, but they do \emph{relative to} C, in the sense of Ref.~\cite{RelTher}. We use the term \emph{conditional athermality} for the resource relative to CTO. By definition, CTO cannot create or increase conditional athermality.

\subsection{The role of measurements}
CTO's allow arbitrary measurements on the C system, with the measurement outcome determining the action on S. The outcomes are left ``unread'' from the perspective of external observers, in the sense that a CTO is defined by summing over all possible outcomes. Nevertheless, such measurements possess the power to ``unlock'' the conditional athermality of S relative to C and channel it for useful purposes. Consider, for example, the locally thermal state
\be
\rho_{\rS\rC}=\fr1{Z_\rS}\sum_{j}\exp\left(-\beta E_j\right)\proj{E_j}\otimes\proj j,
\ee
with $\left\{\ket j\right\}$ an orthonormal basis on C. Using thermal operations $\cT^{(j)}$ that effect $\ket{E_j}\mapsto\ket{E_0}$ (the ground state), we can construct a CTO that achieves $\rho_{\rS\rC}\mapsto\proj{E_0}\otimes\sigma_\rC$ (here $\sigma_\rC$ is arbitrary and irrelevant). The classical correlations between S and C have enabled us to change the marginal state of S from the thermal state to a pure energy eigenstate!

Does the above example mean that CTO's trivialize the resource structure of TO's? This is not so: the resource structure induced by CTO's is in fact richer than that under TO's, and subsumes the latter. The TO resource theory can be recovered in its entirety from the CTO theory by considering those instances where both the initial and final state are product states, i.e.\ completely uncorrelated between S and C.

In general, the CTO formalism provides a platform to study the intimate connection between correlations and thermal inequilibrium. Measurements with readout are likely to be even more resourceful in converting correlations to athermality. In general, it is also important to consider the back{\hyp}action of measurements on the control system itself. We leave for future work the study of such measurements and of the deeper connections between correlations and athermality. In the remainder of this paper we develop some of the more basic aspects of the resource theory of conditional athermality.

\begin{figure*}[t!]
\centering
    \includegraphics[width=\textwidth]{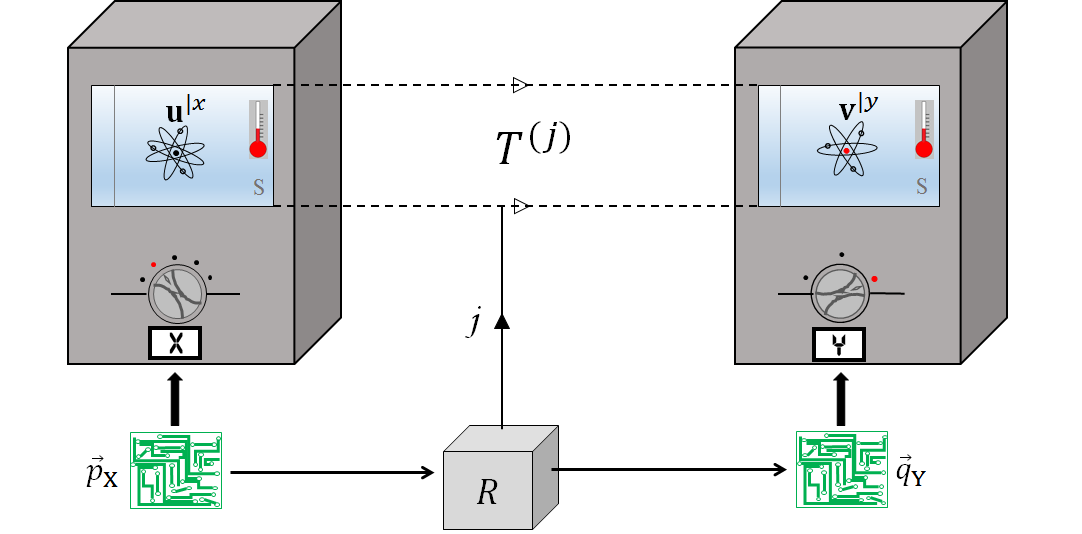}
    \caption[justification=justified]{In a hybrid classical\n quantum thermal device (left), a classical circuit (associated with probability distribution $\vec p_\rX$) determines the classical setting $x$ of the device, which in turn prepares the state $\vect u^{|x}$ on the quantum system S. The rest of the figure shows a schematic of a conditioned thermal operation (CTO): a classical measurement{\hyp}based transformation $R$ takes $\vec p_\rX$ to $\vec q_\rY$, simultaneously outputting a classical value $j$. A corresponding set of thermal operations $T^{(j)}$ determined by $j$ take S from conditional states $\vect u^{|x}$ to $\vect v^{|y}$.}
    \label{Schem}
\end{figure*}
\subsection{The quasiclassical case}
In the remainder, we will develop the resource theory of conditional athermality for the \emph{quasiclassical} regime \cite{Nan,Reth,NU}, where the state of S is a mixture of the eigenstates of its Hamiltonian $H$:
\be\label{qcl}
\rho_\rS=\sum_{i=1}^d{u}_i\proj{i},
\ee
with $\vect u\equiv({u}_1,{u}_2\dots,{u}_d)\tp$ a probability distribution, and the $\ket{i}$ orthonormal eigenvectors of $H$ belonging, respectively, to eigenvalues $E_i$. The Gibbs state $\gamma$ is denoted by the vector $\vect g$. The action of a generic TO on the quasiclassical $\rho_\rS$ of (\ref{qcl}) is effectively a transformation $\vect u\mapsto T\vect u$, with $T$ a stochastic matrix satisfying $T\vect g=\vect g$.

Correspondingly, we will assume that the control is also some classical system X. That is, states of X lack coherence relative to some ``preferred basis'', $\left\{\ket x\right\}$. This can result if the CTO dynamics is much slower than the typical decoherence time scale.
%\begin{figure*}[t]
%\centering
%    \includegraphics[width=\textwidth]{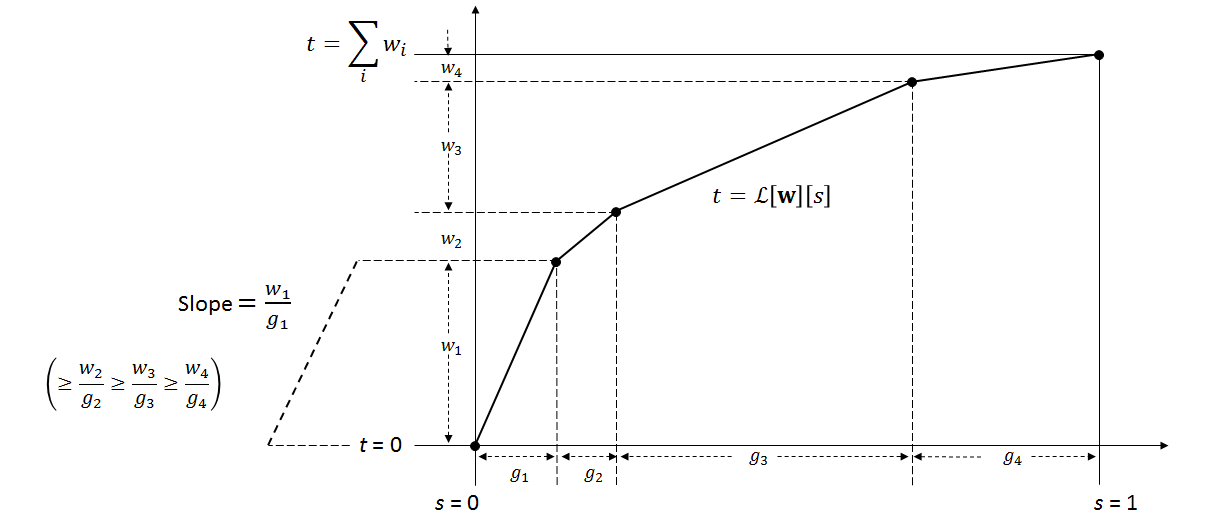}
%    \caption[justification=justified]{The Lorenz curve construction for a sub{\hyp}normalized vector $\vect w$ on a system with Gibbs state $\vect g$: First index the energy eigenstates such that ${w}_1/g_1\ge{w}_2/g_2\dots\ge{w}_d/g_d$. Then connect the points $(0,0)$, $(g_1,{w}_1)$, $(g_1+g_2,{w}_1+{w}_2)\dots$, $(1,1)$ with line segments to obtain the Lorenz curve $\cL[\vect w]$. The curve defines a function $\cL[\vect w](s)$, which we call the Lorenz function of $\vect w$.}
%    \label{LorC}
%\end{figure*}

If X has $\ell$ possible settings, it can be prepared in one of those specific settings, or in some probabilistic mixture thereof, represented by a probability distribution $\vec p_\rX\equiv({p}_1,p_2\dots,{p}_\ell)$ (we will represent states of X using row vectors). We can change this state via an arbitrary classical transformation, represented by a $m\times\ell$ row{\hyp}stochastic matrix $R$ mapping $\vec p_\rX\mapsto\vec q_\rY=\vec p_\rX R$ (for clarity, we use different letters to denote the initial and final version of the classical register).
%\begin{figure}[t]
%\centering
%    \includegraphics[width=.5\columnwidth]{Hyb.png}
%    \caption[justification=justified]{Depiction of}
%    \label{Hyb}
%\end{figure}

A classically controlled thermal device consists of S and X combined (Fig.~\ref{Schem}, left) in a state represented by a $d\times\ell$ matrix
\begin{align}
U_{\rS\rX}&\equiv\left(p_1\vect u^{|1},p_2\vect u^{|2}\dots,p_\ell\vect u^{|\ell}\right)\nonumber\\
&\equiv\left(\vect u^{1},\vect u^{2}\dots,\vect u^{\ell}\right).
\end{align}
We shorten $p_x\vect u^{|x}$ to $\vect u^{x}$. A generic CTO in the quasiclassical limit takes the form
\be\label{dcto}
U_{\rS\rX}\mapsto\cE\left(U_{\rS\rX}\right)\equiv\sum_{j=1}^nT^{(j)}U_{\rS\rX}{R}^j,
\ee
where each $T^{(j)}$ is a classical TO, and each $R^j$ a sub-stochastic matrix such that $\sum_jR^j$ is stochastic. In the remainder, the term CTO is used in this restricted sense.

\subsection{Measures of conditional athermality}
One way to quantify resources is by constructing resourcefulness measures called \emph{monotones}:
\begin{definition}
A \emph{monotone under CTO}, or \emph{conditional athermality monotone}, is a real{\hyp}valued function $\Phi\left[U_{\rS\rX}\right]$ that does not increase under CTO. That is,
\be\label{dmon}
\Phi\left[\cE\left(U_{\rS\rX}\right)\right]\le\Phi\left[U_{\rS\rX}\right]
\ee
for all quantum\n classical states $U_{\rS\rX}$ and CTO $\cE$.
\end{definition}
Conditional athermality monotones are generalizations of the free energy of classical thermodynamics, in that they can never increase under any allowed evolution. We now find a way to construct a large class of conditional athermality monotones.
\begin{propm}\label{mono}
Let $\phi(\vect u_{\rS})$ be a \emph{convex athermality monotone} on $\rS$. That is, $\phi(T\vect u_{\rS})\le\phi(\vect u_{\rS})$ for all states $\vect u_\rS$ of $\rS$ and all TO ($\vect g${\hyp}preserving column{\hyp}stochastic matrices) $T$, and furthermore, $\phi\left(\alpha\vect u+[1-\alpha]\vect v\right)\le\alpha\phi\left(\vect u\right)+(1-\alpha)\phi\left(\vect v\right)$ for all states $(\vect u,\vect v)$ and $\alpha\in[0,1]$. For every bipartite state $U_{\rS\rX}\equiv\left(p_1\vect u^{|1},p_2\vect u^{|2}\dots,p_\ell\vect u^{|\ell}\right)$, define
\be
\Phi\left[U_{\rS\rX}\right]:=\sum_{x=1}^\ell p_x\phi\left(\vect u^{|x}\right).
\ee
Then $\Phi$ is a conditional athermality monotone.
\end{propm}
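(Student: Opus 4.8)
The plan is to recognize $\Phi$ as a sum, over the control settings, of a perspective-type function built from $\phi$, and then to exploit the two defining properties of $\phi$---TO{\hyp}monotonicity and convexity---one after the other. Writing $\vect u^x=p_x\vect u^{|x}$ for the unnormalized columns of $U_{\rS\rX}$, so that $p_x=\sum_i u^x_i$ and $\vect u^{|x}=\vect u^x/p_x$, the measure reads $\Phi\left[U_{\rS\rX}\right]=\sum_x p_x\phi\left(\vect u^x/p_x\right)$. This is exactly the perspective of $\phi$ summed over columns, and the whole argument amounts to showing that this quantity cannot grow under the action $U\mapsto\sum_j T^{(j)}UR^j$.

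First I would compute the output explicitly. With $R:=\sum_j R^j$ (stochastic by hypothesis), the output $V=\cE\left(U_{\rS\rX}\right)$ has $y${\hyp}th column $\vect v^y=\sum_{j,x}\left(R^j\right)_{xy}\,p_x\,T^{(j)}\vect u^{|x}$. Because each $T^{(j)}$ is column{\hyp}stochastic it preserves total weight, so the output marginal on the new register is $q_y=\sum_i v^y_i=\sum_x p_x R_{xy}$, matching the classical update $\vec q=\vec p R$. The decisive observation is then that, for each fixed $y$ with $q_y>0$, the normalized column is a genuine convex combination
\be
\vect v^y/q_y=\sum_{j,x}\lambda^{(y)}_{jx}\,T^{(j)}\vect u^{|x},\qquad\lambda^{(y)}_{jx}:=\fr{\left(R^j\right)_{xy}\,p_x}{q_y},
\ee
whose weights are nonnegative and sum to one (the sum over $(j,x)$ telescopes to $q_y/q_y=1$ via the definition of $q_y$). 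Columns with $q_y=0$ vanish identically and are simply dropped.

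With this decomposition in hand, the inequality follows by applying the two properties of $\phi$ in sequence. Convexity gives $\phi\left(\vect v^y/q_y\right)\le\sum_{j,x}\lambda^{(y)}_{jx}\phi\left(T^{(j)}\vect u^{|x}\right)$, and TO{\hyp}monotonicity of $\phi$ bounds each summand by $\phi\left(\vect u^{|x}\right)$. Multiplying through by $q_y$ clears the denominators, since $q_y\lambda^{(y)}_{jx}=\left(R^j\right)_{xy}p_x$, and summing over $y$ yields $\Phi\left[V\right]\le\sum_x p_x\phi\left(\vect u^{|x}\right)\sum_{j,y}\left(R^j\right)_{xy}$. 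The trailing factor equals $\sum_y R_{xy}=1$ because $R$ is stochastic, leaving precisely $\Phi\left[U_{\rS\rX}\right]$ on the right.

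I expect the only real subtlety to be bookkeeping rather than ideas: one must keep the normalizations straight so that the $\lambda^{(y)}_{jx}$ genuinely form a probability distribution---this is where the nonnegativity of the individual sub{\hyp}stochastic $R^j$ and the stochasticity of their sum $R$ are both used---and one must dispose of the degenerate $q_y=0$ columns by convention. The conceptual core, namely that the average of a convex, monotone function over a post{\hyp}measurement ensemble can only shrink, is exactly the mechanism that renders such ``average resource'' functionals monotone under measure{\hyp}and{\hyp}act operations; the proof is in essence a careful instance of a data{\hyp}processing inequality for the perspective of $\phi$.
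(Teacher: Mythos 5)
Your proof is correct and follows essentially the same route as the paper: write each output column as a convex combination of states of the form $T^{(j)}\vect u^{|x}$, apply convexity of $\phi$, then TO-monotonicity, and close with the row-stochasticity of $R=\sum_j R^j$. The only (cosmetic) difference is that the paper first re-indexes the decomposition so that each pair $(x,y)$ carries a single averaged TO $\tilde T^{(x,y)}:=\fr1{R_{xy}}\sum_j R^j_{xy}T^{(j)}$, whereas you keep the $j$ index explicit and let the convexity step absorb that extra averaging \textemdash\ both are valid and the bookkeeping (including the $q_y=0$ columns) checks out.
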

In the following, we will find conditions for resource interconvertibility, in terms of a family of such monotones. More such monotones are likely to be involved in the conditions for \emph{catalytic} conversion, which we will not study in this work.

\subsection{Single{\hyp}copy conditional athermality conversion}
Given one copy each of two arbitrary states $U_{\rS\rX}$ and $V_{\rS\rX}$, how do we determine if $U_{\rS\rX}\ct V_{\rS\rX}$, i.e., if there exists a CTO $\cE$ such that $V_{\rS\rX}=\cE\left(U_{\rS\rX}\right)$?
\begin{figure*}[t]
\centering
    \includegraphics[width=\textwidth]{}
    \caption[justification=justified]{The Lorenz curve construction for a sub-normalized vector $\vect w$ on a system with Gibbs state $\vect g$: First index the energy eigenstates such that ${w}_1/g_1\ge{w}_2/g_2\dots\ge{w}_d/g_d$. Then connect the points $(0,0)$, $(g_1,{w}_1)$, $(g_1+g_2,{w}_1+{w}_2)\dots$, $(1,1)$ with line segments to obtain the Lorenz curve $\cL[\vect w]$. The curve defines a function $\cL[\vect w](s)$, which we call the Lorenz function of $\vect w$.}
    \label{LorC}
\end{figure*}

Ref.~\cite{Nan} found that a state transformation $\vect u\tho\vect v$ is possible if and only if the \emph{Lorenz curve} (see Fig.~\ref{LorC}) of $\vect u$ is nowhere below that of $\vect v$:
\begin{equation}\label{tolor}
\cL\left[\vect u\right](s)\ge\cL\left[\vect v\right](s)\quad\forall s\in[0,1].
\end{equation}
This condition is described as ``$\vect u$ \emph{thermo{\hyp}majorizes} $\vect v$''. Combining this condition with the structure of CTO defined in Eq.~\ref{dcto}, we prove the following (details in the appendix):
\begin{propm}\label{plqm}
For $U$ and $V$ as defined above, $U\ct V$ if and only if there exists an $\ell\times m$ row{\hyp}stochastic matrix ${R}\equiv\left(r_{xy}\right)$, such that for each $y\in\{1,2\dots,m\}$,
\be\label{lorcom}
\sum_{x=1}^\ell r_{xy}\cL\left[\vect u^{x}\right](s)\ge\cL\left[\vect v^{y}\right](s)\quad\forall s\in[0,1].
\ee
Note that the Lorenz functions appearing above are all sub{\hyp}normalized by virtue of the sub{\hyp}normalization of the $\vect u^x$ and the $\vect v^y$.
\end{propm}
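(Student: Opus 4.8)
\emph{Proof strategy.} My plan is to read the condition off column by column, using three elementary properties of the Lorenz function: (i) positive homogeneity, $\cL[c\vect w]=c\,\cL[\vect w]$ for $c\ge0$; (ii) subadditivity, $\cL[\vect w_1+\vect w_2](s)\le\cL[\vect w_1](s)+\cL[\vect w_2](s)$; and (iii) monotonicity under TO, $\cL[T\vect w](s)\le\cL[\vect w](s)$. Property (i) is immediate, while (ii) and (iii) both follow from the variational description
\be
\cL[\vect w](s)=\max\left\{\sum_i t_i w_i \ :\ 0\le t_i\le1\ \forall i,\ \sum_i t_i g_i=s\right\},
\ee
subadditivity because a common test vector $\vect t$ makes the objective split additively while each summand is bounded by the respective Lorenz value, and monotonicity because $T\tp\vect t$ is again an admissible test vector (it is nonnegative, bounded by $1$ by column-stochasticity, and satisfies $(T\tp\vect t)\cdot\vect g=s$ since $T\vect g=\vect g$). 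I would also record the fact, used repeatedly below, that curve-domination together with equal mass is equivalent to TO-reachability, $\cL[\vect a]\le\cL[\vect b]$ with $\|\vect a\|_1=\|\vect b\|_1$ iff $\vect b\tho\vect a$, i.e.\ the sub-normalized form of Eq.~\eqref{tolor}.

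Necessity is then a direct computation. Writing out the $y$-th column of a CTO gives $\vect v^y=\sum_{j,x}(R^j)_{xy}\,T^{(j)}\vect u^x$, and applying (i)--(iii) termwise yields
\be
\cL[\vect v^y](s)\le\sum_{j,x}(R^j)_{xy}\,\cL[T^{(j)}\vect u^x](s)\le\sum_x r_{xy}\,\cL[\vect u^x](s),
\ee
with $r_{xy}:=\sum_j(R^j)_{xy}$. The matrix $R=(r_{xy})=\sum_jR^j$ is row-stochastic by hypothesis, which is exactly the claimed condition. Evaluating at $s=1$ and summing over $y$ then forces the column masses to match, $\sum_x r_{xy}\|\vect u^x\|_1=\|\vect v^y\|_1$, a fact I will reuse for sufficiency.

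For sufficiency I would first reduce the construction of a CTO to a purely single-target problem. Given TOs $T_{xy}$, the assignment $\vect v^y=\sum_x r_{xy}T_{xy}\vect u^x$ is realized by a legitimate CTO: index one branch by each pair $(x,y)$, set $T^{(x,y)}:=T_{xy}$, and let $R^{(x,y)}$ be the matrix whose only nonzero entry is $r_{xy}$ in position $(x,y)$. Each $R^{(x,y)}$ is sub-stochastic, $\sum_{(x,y)}R^{(x,y)}=R$ is stochastic, and the $y$-th column of $\sum_{(x,y)}T^{(x,y)}U R^{(x,y)}$ is exactly $\sum_x r_{xy}T_{xy}\vect u^x$. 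Hence it suffices, for each fixed $y$ and with $\vect w_x:=r_{xy}\vect u^x$ and $\vect z:=\vect v^y$, to produce TOs with $\vect z=\sum_x T_x\vect w_x$. By the thermo-majorization theorem this is equivalent to splitting $\vect z=\sum_x\vect z_x$ into parts with $\|\vect z_x\|_1=\|\vect w_x\|_1$ and $\cL[\vect z_x]\le\cL[\vect w_x]$, given only that $\sum_x\cL[\vect w_x]\ge\cL[\vect z]$ and that the total masses agree.

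This splitting lemma is the heart of the matter and the step I expect to be the main obstacle, because the vector-level constraint $\sum_x\vect z_x=\vect z$ does not respect the independent energy-orderings underlying the sum $\sum_x\cL[\vect w_x]$. I would attack it through the polytope identity $\mathcal{F}(\vect w_1)+\dots+\mathcal{F}(\vect w_\ell)=\{\vect a\ge0:\|\vect a\|_1=\sum_x\|\vect w_x\|_1,\ \cL[\vect a]\le\sum_x\cL[\vect w_x]\}$, where $\mathcal{F}(\vect w)$ is the thermal future-cone of $\vect w$; the inclusion ``$\subseteq$'' is subadditivity, and ``$\supseteq$'' is precisely the desired splitting. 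To obtain the reverse inclusion I would compare support functions: the constraints $\sum_i t_i a_i\le\cL[\vect w](\sum_i t_ig_i)$ have polymatroid structure, so maximizing any linear functional $\vect c\cdot\vect a$ over $\mathcal{F}(\vect w)$ is solved greedily and equals a Choquet integral of $\vect c$ that is a \emph{fixed, curve-independent} linear functional of $\cL[\vect w]$. Additivity of this functional matches $\sum_x\cL[\vect w_x]$ with the support function of the right-hand set, forcing the two convex bodies to coincide. A more elementary alternative is induction on $\ell$: peel off one budget by carving $\vect z$ into a piece dominated by $\cL[\vect w_\ell]$ and a remainder dominated by $\sum_{x<\ell}\cL[\vect w_x]$, the two-budget carve itself being settled by an LP-duality (Farkas) certificate contradicting $\sum_x\cL[\vect w_x]\ge\cL[\vect z]$, or by an explicit greedy filling in the Lorenz picture. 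Either way, establishing tightness of the subadditive bound --- that the sum of Lorenz curves is not merely an upper bound but is actually achievable --- is where the real work lies.
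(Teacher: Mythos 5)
Your necessity argument is sound and, if anything, cleaner than the paper's: the paper reaches the same inequality by proving convexity of the Lorenz curve in the state (its Lemma on Gibbs-rational Hamiltonians and limits thereof) plus monotonicity under TO, whereas your variational formula $\cL[\vect w](s)=\max\{\vect t\cdot\vect w:0\le\vect t\le\vect 1,\ \vect t\cdot\vect g=s\}$ delivers subadditivity and TO-monotonicity in two lines. Your reduction of the CTO to one branch per pair $(x,y)$ is exactly the paper's Observation on reindexing by $(x,y)$ and its Corollary that $U\ct V$ iff $\vect v^{|y}=\sum_x r_{x|y}T^{(x,y)}\vect u^{|x}$ for some TOs and row-stochastic $R$.

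For sufficiency the two proofs genuinely diverge, and you have correctly located the crux: showing that the subadditive bound $\sum_x\cL[\vect w_x]$ is \emph{achievable}. You propose to split the target, $\vect z=\sum_x\vect z_x$ with $\cL[\vect z_x]\le\cL[\vect w_x]$ and matching masses, and to prove this splitting lemma via a Minkowski-sum identity for thermal future-cones, certified by matching polymatroid support functions (or by induction with a Farkas certificate). That route is viable --- the feasible sets are base polytopes of the submodular set functions $S\mapsto\cL[\vect w](\sum_{i\in S}g_i)$, the greedy optimum is a fixed linear functional of the curve, and additivity of support functions then closes the argument --- but you have left it as a program rather than a proof, and by your own admission this is ``where the real work lies.'' The paper sidesteps the splitting entirely: it first applies a TO to each $\vect w_x$ that maps it to a state $\tilde{\vect w}_x$ on a $D$-dimensional system (with a modified Hamiltonian whose Gibbs weights are the gaps between the common bend abscissae) having the \emph{same} Lorenz curve but a common $\beta$-ordering across all $x$. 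Because the orderings are aligned, $\cL\left[\sum_x\tilde{\vect w}_x\right]=\sum_x\cL[\vect w_x]\ge\cL[\vect z]$ exactly, so a \emph{single} final TO carries the sum to $\vect z$, and composing gives $T_x=T\circ\tilde T_x$. The trade-off: the paper's alignment trick requires temporarily changing the Hamiltonian (which it must justify within the fixed-Hamiltonian formalism), while your approach stays in the original space but demands the harder tightness lemma. If you complete the polymatroid step, your proof stands as a correct and self-contained alternative; as written, that one step is a gap.
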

The case $s=1$ in the family of inequalities~(\ref{lorcom}) implies $\vec q_\rY=\vec p_\rX R$, as expected of the marginal on the classical register. On the other hand, summing over $y$ gives us the condition $\sum_y\cL\left[\vect v^{y}\right](s)\le\sum_x\cL\left[\vect u^{x}\right](s)$ for all $s$, i.e., that the quantum part of $U$ must thermo{\hyp}majorize that of $V$ on average (where the average is taken \emph{after} the evaluation of the Lorenz function).
%
%The relation $\ct$ is reflexive ($U_{\rS\rX}\ct U_{\rS\rX}$ for every $U_{\rS\rX}$) and transitive ($U_{\rS\rX}\ct V_{\rS\rY}$ and $V_{\rS\rY}\ct W_{\rS\rZ}$ imply $U_{\rS\rX}\ct W_{\rS\rZ}$). Therefore, the relation induces a \emph{preorder} among conditional athermality resources. Note that this preorder is not a partial order, because there exist pairs $(U_{\rS\rX},V_{\rS\rY})$ such that $U_{\rS\rX}\ne V_{\rS\rY}$ and yet both $U_{\rS\rX}\ct V_{\rS\rY}$ and $V_{\rS\rY}\ct U_{\rS\rX}$ hold (e.g., construct $V_{\rS\rY}$ from a given $U_{\rS\rX}$ by splitting some column of $U_{\rS\rX}$ into two columns proportional to each other). Also, the order is only partial and not total. That is, there exist pairs $(U_{\rS\rX},V_{\rS\rY})$ such that neither $U_{\rS\rX}\ct V_{\rS\rY}$ nor $V_{\rS\rY}\ct U_{\rS\rX}$ holds (this follows from the partiality of the thermo{\hyp}majorization order).

The condition (\ref{lorcom}) runs over $s\in[0,1]$, but we don't really need to check for all values of $s$. Note that $\cL(\vect w)$ for any $\vect w$ is continuous and piecewise linear, with at most $d-1$ ``bends''. In addition, it is also concave and monotonously non{\hyp}decreasing. Therefore in order to determine whether a given $\vect u$ thermo{\hyp}majorizes $\vect v$, it suffices to compare their Lorenz curves only at the abscissae where $\cL\left[\vect v\right]$ bends.

Let us now reconsider the convertibility question $U\ct V$. For any given $y$, there are at most $(d-1)$ bends in $\cL\left[\vect v^{y}\right]$; therefore, all $y$'s considered, there are up to $m(d-1)$. Define $D$ such that $(D-1)$ is the total number of distinct bend abscissae ($D\le m[d-1]+1$), which we shall call $s_1<s_2\dots<s_{D-1}$. In addition, let $s_0:=0$ and $s_D:=1$. Now define the $D\times\ell$ matrix $P$ and the $D\times m$ matrix $Q$:
\begin{align}\label{uvpq}
p_{ix}&=\cL\left[\vect u^{x}\right](s_i)-\cL\left[\vect u^{x}\right](s_{i-1});\nonumber\\
q_{iy}&=\cL\left[\vect v^{y}\right](s_i)-\cL\left[\vect v^{y}\right](s_{i-1}).
\end{align}
Note that $P$ and $Q$ are normalized bipartite probability distributions. Also, every pair $(U,V)$ uniquely determines a pair $(P,Q)$. In fact, $Q$ is uniquely determined by $V$; however, $P$ depends on both $U$ and $V$, because we chose the $s_i$ based on the Lorenz curves of the $\vect v^y$. In terms of $P$ and $Q$, Proposition~\ref{plqm} can be recast as follows:
\begin{propm}\label{lempq}
For any pair $(U,V)$, define $({P},{Q})$ as above. Then,
\be\label{lprq}
U\ct V\;\Longleftrightarrow\;\exists\textnormal{ row{\hyp}stochastic }{R}:L{P}{R}\ge L{Q},
\ee
where the inequality is entriwise, and $L$ is the $D\times D$ lower{\hyp}triangular matrix with all diagonal and lower elements equalling 1.
\end{propm}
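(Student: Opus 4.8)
The plan is to show that the matrix inequality $LPR\ge LQ$ is nothing but the family of Lorenz inequalities~(\ref{lorcom}) of Proposition~\ref{plqm} evaluated on the finite grid $\{s_1,\dots,s_D\}$, and that restricting to this grid loses no information. First I would compute $LP$ and $LQ$ explicitly. Since $L$ is lower{\hyp}triangular with unit entries on and below the diagonal, the product $(LP)_{ix}=\sum_{k=1}^{i}p_{kx}$ telescopes, using the definition~(\ref{uvpq}) of $p_{kx}$, to $\cL[\vect u^x](s_i)-\cL[\vect u^x](s_0)=\cL[\vect u^x](s_i)$, where the last step uses $s_0=0$ and $\cL[\vect w](0)=0$ for every $\vect w$. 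Likewise $(LQ)_{iy}=\cL[\vect v^y](s_i)$. Consequently, for any $\ell\times m$ row{\hyp}stochastic $R\equiv(r_{xy})$, the $(i,y)$ entry of $LPR$ equals $\sum_{x=1}^\ell r_{xy}\,\cL[\vect u^x](s_i)$, so the entrywise inequality $LPR\ge LQ$ is exactly
\be
\sum_{x=1}^\ell r_{xy}\,\cL\left[\vect u^x\right](s_i)\ge\cL\left[\vect v^y\right](s_i)\qquad\forall\,i\in\{1,\dots,D\},\ \forall\,y.
\ee

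With this identification the equivalence in~(\ref{lprq}) reduces to showing that the grid conditions above are equivalent to the continuous family~(\ref{lorcom}), for one and the same $R$. The direction ``(\ref{lorcom}) $\Rightarrow$ grid'' is immediate, since the grid abscissae lie in $[0,1]$. For the converse I would use the geometry of the curves. Fix $y$ and set $f(s):=\sum_x r_{xy}\,\cL[\vect u^x](s)$ and $g(s):=\cL[\vect v^y](s)$. Because each $\cL[\vect u^x]$ is concave and the coefficients $r_{xy}$ are non{\hyp}negative, $f$ is concave; and by the very choice of the $s_i$, every bend abscissa of $g$ lies in $\{s_1,\dots,s_{D-1}\}$, so $g$ is linear on each interval $[s_{i-1},s_i]$. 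On such an interval, writing $s=\lambda s_{i-1}+(1-\lambda)s_i$ with $\lambda\in[0,1]$, concavity gives $f(s)\ge\lambda f(s_{i-1})+(1-\lambda)f(s_i)$ while linearity gives $g(s)=\lambda g(s_{i-1})+(1-\lambda)g(s_i)$. Hence if $f\ge g$ holds at the endpoints $s_{i-1},s_i$ then it holds throughout $[s_{i-1},s_i]$; noting $f(s_0)=g(s_0)=0$ closes the first interval $[0,s_1]$, and sweeping $i$ from $1$ to $D$ recovers~(\ref{lorcom}) for all $s\in[0,1]$.

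Combining the two steps, the existence of a row{\hyp}stochastic $R$ with $LPR\ge LQ$ is equivalent to the existence of a row{\hyp}stochastic $R$ satisfying~(\ref{lorcom}), which by Proposition~\ref{plqm} is equivalent to $U\ct V$; this is precisely~(\ref{lprq}). I expect the main obstacle to be the converse (grid $\Rightarrow$ continuous) direction: the argument hinges on $f$ being concave, which relies on the fact that a \emph{non{\hyp}negative} combination of concave Lorenz functions is again concave, together with the careful selection of the $s_i$ as the union of all bend abscissae of the $\vect v^y$, so that each $\cL[\vect v^y]$ is genuinely linear between consecutive grid points. Everything else\m the telescoping evaluation of $LP$ and $LQ$ and the bookkeeping of matrix dimensions\m is routine.
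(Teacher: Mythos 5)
Your proposal is correct and follows essentially the same route as the paper: the paper likewise obtains Proposition~\ref{lempq} from Proposition~\ref{plqm} by noting (its Observation on bend abscissae) that a concave curve need only be compared to a piecewise-linear one at the latter's bends, and by identifying the partial sums $LP$, $LQ$ with the Lorenz-function values at the grid points $s_i$. Your write-up merely makes explicit the telescoping computation and the concavity-versus-linearity argument on each interval $[s_{i-1},s_i]$, which the paper leaves implicit.
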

\begin{corom}\label{stec}
In the case $\ell=1$, i.e. when $U\equiv\vect u$ is just an athermality resource with a trivial classical register, $U\ct V$ if and only if
\be\label{estec}
\cL[{\vect p}](s_i)\ge\cL\left[{\vect q^{|y}}\right](s_i)\quad\forall y\in\{1\dots,m\},i\in\{1\dots,D\}.
\ee
\end{corom}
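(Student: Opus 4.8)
The plan is to specialize Proposition~\ref{lempq} to the case $\ell=1$ and show that the normalization (marginal) constraint pins down the matrix $R$ uniquely, thereby collapsing the existential quantifier into a direct thermo-majorization check. First I would unpack the objects in this case. With $\ell=1$, the matrix $P$ is the $D\times 1$ column of increments of $\cL[\vect u]$ (here $\vect u^1=\vect u$ since $p_1=1$), $R$ is a single nonnegative row $(r_{11},\dots,r_{1m})$ summing to $1$, and $Q$ is $D\times m$. Because $L$ is the all-ones lower-triangular matrix, the products telescope against $\cL[\vect u](s_0)=\cL[\vect v^y](s_0)=0$, giving $(LP)_i=\cL[\vect u](s_i)$ and $(LQ)_{iy}=\cL[\vect v^y](s_i)=q_y\,\cL[\vect v^{|y}](s_i)$, where I use the positive homogeneity $\cL[\vect v^y]=q_y\,\cL[\vect v^{|y}]$ of the Lorenz construction. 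Hence the entrywise inequality $LPR\ge LQ$ reads
\be
r_{1y}\,\cL[\vect u](s_i)\ge q_y\,\cL[\vect v^{|y}](s_i)\qquad\forall\,i,y.
\ee

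The key step is to evaluate this family at $i=D$, where $s_D=1$. Since $\vect u$ and each $\vect v^{|y}$ are normalized, $\cL[\vect u](1)=\cL[\vect v^{|y}](1)=1$, so the inequality collapses to $r_{1y}\ge q_y$ for every $y$. Summing over $y$ and using $\sum_y r_{1y}=1=\sum_y q_y$ forces $r_{1y}=q_y$ for all $y$; that is, the only admissible $R$ is the one reproducing the output marginal $\vec q_\rY$. This is the crux of the argument and the one point that must be noticed rather than merely computed: the marginal constraint, evaluated at the right-hand endpoint of the Lorenz curve, leaves no freedom in $R$.

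Substituting $r_{1y}=q_y$ back, the inequalities become $q_y\,\cL[\vect u](s_i)\ge q_y\,\cL[\vect v^{|y}](s_i)$, which for $q_y>0$ is exactly $\cL[\vect p](s_i)\ge\cL[\vect q^{|y}](s_i)$ (with $\vect p\equiv\vect u$, $\vect q^{|y}\equiv\vect v^{|y}$), and is vacuous when $q_y=0$; this gives necessity. For sufficiency I would reverse the reasoning: given the stated inequalities, take $R=(q_1,\dots,q_m)$, which is row-stochastic, and verify directly that $LPR\ge LQ$ holds entrywise (the $q_y=0$ rows being trivially $0\ge 0$), so Proposition~\ref{lempq} yields $U\ct V$. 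I expect no genuine obstacle beyond the normalization observation of the second paragraph; in particular, the reduction from all $s\in[0,1]$ to the finite set $\{s_i\}$ is already absorbed into the passage to $(P,Q)$, so no separate concavity/piecewise-linearity argument is needed here.
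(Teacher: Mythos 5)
Your proposal is correct and follows essentially the same route the paper intends: Corollary~\ref{stec} is left as a direct specialization of Proposition~\ref{lempq}, and the paper itself already notes (just after Proposition~\ref{plqm}) that the $s=1$ case of~(\ref{lorcom}) forces the marginal condition $\vec q_\rY=\vec p_\rX R$, which for $\ell=1$ is exactly your observation that $r_{1y}=q_y$ is the only admissible choice. Your telescoping of $LP$ and $LQ$ and the homogeneity $\cL[\vect v^y]=q_y\cL[\vect v^{|y}]$ are the right bookkeeping, so nothing further is needed.
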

\begin{corom}\label{corom}
In the case $m=1$, i.e., when $V\equiv\vect v$ and $U\equiv\left(p_1\vect u^{|1},p_2\vect u^{|2}\dots,p_\ell\vect u^{|\ell}\right)$, $U\ct V$ if and only if
\be\label{ecorom}
\sum_{x=1}^\ell p_x\cL\left[{\vect p^{|x}}\right](s_i)\ge \cL[{\vect q}](s_i)\quad\forall i\in\{1,2\dots,D\}.
\ee
\end{corom}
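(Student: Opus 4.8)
The plan is to deduce this from Proposition~\ref{lempq} by specializing to $m=1$ and observing that the existential quantifier over $R$ collapses to a single explicit condition. The crucial structural fact is that when $m=1$ the matrix $R$ appearing in Propositions~\ref{plqm} and~\ref{lempq} is $\ell\times 1$, so each of its $\ell$ rows consists of a single entry; row-stochasticity then forces every such entry to equal $1$. Hence $R$ is uniquely the all-ones column vector $\vect{1}$, and the clause ``there exists a row-stochastic $R$ with $LPR\ge LQ$'' reduces to ``the inequality $LP\vect{1}\ge LQ$ holds.''

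Next I would unpack this inequality. Since $s_0=0$ and $\cL[\vect w](0)=0$ for every $\vect w$, the lower-triangular all-ones matrix $L$ turns the increment vectors of~(\ref{uvpq}) into cumulative Lorenz values: $(LP)_{ix}=\sum_{k\le i}p_{kx}=\cL[\vect u^{x}](s_i)$ and $(LQ)_{i1}=\cL[\vect v](s_i)$. Multiplying $LP$ on the right by $\vect{1}$ sums over $x$, so row $i$ of $LP\vect{1}\ge LQ$ reads $\sum_{x=1}^{\ell}\cL[\vect u^{x}](s_i)\ge\cL[\vect v](s_i)$ for each $i\in\{1,\dots,D\}$. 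It remains to restore the weights $p_x$ using homogeneity of the Lorenz function: scaling a sub-normalized vector $\vect w\mapsto c\vect w$ with $c\ge 0$ leaves the ordering $w_i/g_i$ unchanged and scales the ordinates of the Lorenz curve by $c$ while fixing its abscissae, so $\cL[c\vect w](s)=c\,\cL[\vect w](s)$. Applying this to $\vect u^{x}=p_x\vect u^{|x}$ gives $\cL[\vect u^{x}](s_i)=p_x\cL[\vect u^{|x}](s_i)$, and the condition becomes exactly~(\ref{ecorom}).

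I do not anticipate a genuine obstacle: all the content sits in the single observation that a one-column row-stochastic matrix is forced to be $\vect{1}$, so no optimization over $R$ survives. The only points needing care are the bookkeeping that $L$ converts increments into cumulative Lorenz values (so the entrywise matrix inequality is literally a comparison of Lorenz functions at the $s_i$) and the homogeneity rescaling that reintroduces the $p_x$. As an alternative route, one could bypass Proposition~\ref{lempq} and substitute $R=\vect{1}$ directly into Proposition~\ref{plqm}, obtaining $\sum_x\cL[\vect u^{x}](s)\ge\cL[\vect v](s)$ for all $s\in[0,1]$; restricting this to the bend abscissae $s_i$ is justified because the left-hand side, a sum of concave piecewise-linear functions, is itself concave, whereas $\cL[\vect v]$ is linear between consecutive $s_i$, so domination at the $s_i$ propagates to all of $[0,1]$.
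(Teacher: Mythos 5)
Your proof is correct and follows essentially the route the paper intends: Corollary~\ref{corom} is presented there as a direct specialization of Proposition~\ref{lempq} (equivalently Proposition~\ref{plqm}), and your key observation---that an $\ell\times 1$ row-stochastic $R$ is forced to be the all-ones column, so the existential clause collapses---together with the homogeneity $\cL[p_x\vect u^{|x}](s)=p_x\cL[\vect u^{|x}](s)$ and the concavity argument for restricting to the bend abscissae, is exactly the content needed.
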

\begin{corom}\label{coro5}
Given athermality resources $(\vect u,\vect v)$ such that $\vect u\tho\vect v$ is possible, $U\equiv\left(p\vect u,[1-p]\vect g\right)$ can be converted to $V\equiv\vect v$ by CTO if and only if $p\ge p_{\min}$, where
\be
p_{\min}=\max_{i\in\{1\dots,D-1\}}\left[\fr{\cL[\vect v](s_i)-s_i}{\cL[\vect u](s_i)-s_i}\right].
\ee
\end{corom}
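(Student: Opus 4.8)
The plan is to derive this as the $\ell=2$ specialization of Corollary~\ref{corom}. I read $U\equiv\left(p\vect u,[1-p]\vect g\right)$ as a quantum\n classical state with two settings, conditional quantum states $\vect u^{|1}=\vect u$ and $\vect u^{|2}=\vect g$, carried with weights $p_1=p$ and $p_2=1-p$, while the target $V\equiv\vect v$ has a trivial register ($m=1$). Feeding these identifications into Corollary~\ref{corom}, the conversion $U\ct V$ is possible if and only if
\be
p\,\cL[\vect u](s_i)+(1-p)\,\cL[\vect g](s_i)\ge\cL[\vect v](s_i)\quad\forall i\in\{1,\dots,D\}.
\ee

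First I would evaluate the Gibbs Lorenz function. Because all ratios $g_j/g_j$ coincide, the reordering in the Lorenz construction is immaterial and $\cL[\vect g]$ is simply the chord from $(0,0)$ to $(1,1)$, so $\cL[\vect g](s)=s$. Substituting this and collecting the $p$-dependent terms turns the family above into $p\left(\cL[\vect u](s_i)-s_i\right)\ge\cL[\vect v](s_i)-s_i$ for every $i$. At the endpoints $s_0=0$ and $s_D=1$ every normalized Lorenz function equals $0$ and $1$ respectively, so both sides vanish and these indices are vacuous; only the interior bends $s_1<\dots<s_{D-1}$ carry constraints.

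The crux is the division by $\cL[\vect u](s_i)-s_i$, which I must show is strictly positive at each interior $s_i$. This rests on a standard concavity fact: $\cL[\vect u]$ is concave and pins the points $(0,0)$ and $(1,1)$, hence lies on or above the chord $y=s$, and a concave function that touches its own chord at an interior point must coincide with it throughout. Thus $\cL[\vect u](s_i)>s_i$ at every interior abscissa unless $\vect u=\vect g$, the degenerate case in which $\vect u\tho\vect v$ forces $\vect v=\vect g$ and the claim is vacuous. Dividing through, the condition becomes $p\ge\left(\cL[\vect v](s_i)-s_i\right)/\left(\cL[\vect u](s_i)-s_i\right)$ for each interior $i$; taking the maximum over $i\in\{1,\dots,D-1\}$ reproduces exactly $p\ge p_{\min}$, establishing both directions at once.

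The only delicate step is this strict-positivity argument together with the clean disposal of the endpoints and the $\vect u=\vect g$ case; everything else is bookkeeping. As a sanity check I would record that the hypothesis $\vect u\tho\vect v$ gives $\cL[\vect v](s_i)\le\cL[\vect u](s_i)$, so each ratio, hence $p_{\min}$, lies in $[0,1]$ and is a genuine probability threshold.
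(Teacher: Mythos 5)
Your proof is correct and follows essentially the same route as the paper's: specialize Corollary~\ref{corom} to the two-setting ensemble $\left(p\vect u,[1-p]\vect g\right)$, note $\cL[\vect g](s)=s$, and rearrange to $p\left(\cL[\vect u](s_i)-s_i\right)\ge\cL[\vect v](s_i)-s_i$. Your explicit concavity argument that $\cL[\vect u](s_i)-s_i>0$ at the interior bends (and your disposal of the endpoints and the $\vect u=\vect g$ degenerate case) fills in the division step that the paper leaves implicit, but this is a refinement rather than a genuinely different approach.
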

We see from the corollaries that the state{\hyp}to{\hyp}ensemble case ($\ell=1$) reduces to several independent instances of athermality resource convertibility. Note that this special case is not the same as the probabilistic conversion problems considered in Ref.~\cite{AOP15}. On the other hand, in the ensemble{\hyp}to{\hyp}state case ($m=1$), only the ``average resourcefulness'' of the initial ensemble matters, although the classical register is still important because in general $\sum_xp_x\cL\left[\vect u^{|x}\right](s)\ge\cL\left[\sum_xp_x\vect u^{|x}\right](s)$.

The results of corollaries \ref{stec}, \ref{corom}, and \ref{coro5} are mathematically analogous to corresponding results about probabilistic conversion of pure entanglement resources \cite{SLOCC,Vidal}, with the roles of initial and final states reversed and majorization \cite{MOA} replaced by thermo{\hyp}majorization.

Proposition~\ref{lempq} implies that every instance of the CTO convertibility problem is the feasibility problem of a linear program, which can be solved efficiently using state{\hyp}of{\hyp}the{\hyp}art computers. In this form, the relation $\ct$ becomes very similar to the ``conditional majorization'' relation defined in recent work \cite{CUP15} on the uncertainty principle in the presence of a memory (``conditional uncertainty''). For two $D${\hyp}dimensional vectors $\vect u$ and $\vect v$, $L\vect u\ge L\vect v$ is equivalent to the existence of a $D\times D$ lower{\hyp}triangular column{\hyp}stochastic (LTCS) matrix $\Theta$ such that $\vect v=\Theta\vect u$. This condition has been called \emph{lower{\hyp}triangular majorization} (LT majorization) in Refs.~\cite{MOA,NG15}. By applying methods of convex geometry and the properties of LT majorization (details in the appendix), we translate the conditional athermality convertibility condition (\ref{lprq}) to a family of no{\hyp}go conditions parametrized by matrices from the set
\be
\bbR^{D\times m}_{+,1,\downarrow}=\left\{A\in\bbR^{D\times m}_{+,1}:\forall j,\;a_{1j}\ge a_{2j}\dots\ge a_{Dj}\right\},
\ee
where $\bbR^{D\times m}_{+,1}$ is the set of all $D\times m$ joint distributions. For $A\equiv\left(\vect a^1,\vect a^2\dots,\vect a^m\right)\in\bbR^{D\times m}_{+,1,\downarrow}$, define the sublinear functional $\omega_A:\bbR_+^D\to\bbR$ through
\be\label{supf}
\omega_A(\vect w):=\max_{z\in\{1,2\dots,m\}}\vect a^{z}\cdot\vect w.
\ee
Similar in spirit to the monotones in Proposition~\ref{mono}, define
\be
\Omega_A\left[U,V\right]:=\sum_{x=1}^\ell p_x\omega_A\left(\vect p^{|x}\right)-\sum_{y=1}^mq_y\omega_A\left(\vect q^{|y}\right),
\ee
with $\vect p^{|x}$ and $\vect p^{|x}$ as defined by (\ref{uvpq}). Then,
\begin{theorem}\label{thmon}
For an arbitrary pair of conditional athermality resources $(U,V)$, let $({P},{Q})$ be defined as in Proposition~\ref{lempq}. Then, $U\ct V$ if and only if, for all matrices $A\in\bbR^{D\times m}_{+,1,\downarrow}$,
\be
\Omega_A\left[U,V\right]\ge0.
\ee
\end{theorem}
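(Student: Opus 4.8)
The plan is to use Proposition~\ref{plqm}/\ref{lempq} to reduce the statement to the purely linear-algebraic equivalence: there exists a row-stochastic $R$ with $LPR\ge LQ$ if and only if $\Omega_A[U,V]\ge0$ for every $A\in\bbR^{D\times m}_{+,1,\downarrow}$. I would treat $P$, $Q$, $L$ as fixed matrices, write $\vect p^x$, $\vect q^y$ for the sub-normalized columns of $P$, $Q$, and note that positive homogeneity of $\omega_A$ collapses the stated definition to $\Omega_A[U,V]=\sum_x\omega_A(\vect p^x)-\sum_y\omega_A(\vect q^y)$. The recurring algebraic fact I would isolate first is that a vector $\vect a$ is non-negative and non-increasing exactly when $\vect a=L\tp\vect b$ for some $\vect b\ge0$, so that $\vect a\cdot\vect w=\vect b\cdot(L\vect w)$; this is precisely what links the ``downward'' structure of $\bbR^{D\times m}_{+,1,\downarrow}$ to the matrix $L$.

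For the necessity (``only if'') direction I would argue directly. Given a feasible $R$, the column-$y$ constraint $L\bigl(\sum_x r_{xy}\vect p^x\bigr)\ge L\vect q^y$, paired with the maximizing column $\vect a^{z}$ of $A$ attaining $\omega_A(\vect q^y)$ (with $z=z(y)$) and written through $\vect a^{z}=L\tp\vect b^{z}$, $\vect b^{z}\ge0$, yields the chain $\omega_A(\vect q^y)=\vect a^{z}\cdot\vect q^y\le\sum_x r_{xy}\,\vect a^{z}\cdot\vect p^x\le\sum_x r_{xy}\,\omega_A(\vect p^x)$. Summing over $y$ and invoking $\sum_y r_{xy}=1$ collapses the right side to $\sum_x\omega_A(\vect p^x)$, giving $\Omega_A\ge0$.

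The sufficiency (``if'') direction is the substantive part, and I would prove its contrapositive by convex separation. Let $\mathcal{C}=\{LPR:R\text{ row-stochastic}\}$, a compact convex set, and let $K=\mathcal{C}-\bbR^{D\times m}_{+}$ be its (closed, convex) down-closure; feasibility of the program is exactly $LQ\in K$. If $LQ\notin K$, strong separation supplies a matrix $\Lambda$ with $\langle\Lambda,LQ\rangle>\sup_{X\in K}\langle\Lambda,X\rangle$; finiteness of the supremum over the down-cone forces $\Lambda\ge0$, and $\Lambda\ne0$. Setting $B:=L\tp\Lambda$ makes $B$ non-negative with non-increasing columns $\vect b^y$, i.e.\ a positive multiple of an element of $\bbR^{D\times m}_{+,1,\downarrow}$. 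A short computation gives $\langle\Lambda,LPR\rangle=\sum_{x,y}r_{xy}(\vect b^y\cdot\vect p^x)$, whose supremum over row-stochastic $R$ is $\sum_x\max_y(\vect b^y\cdot\vect p^x)=\sum_x\omega_B(\vect p^x)$, while $\langle\Lambda,LQ\rangle=\sum_y\vect b^y\cdot\vect q^y$. Thus $\sum_y\vect b^y\cdot\vect q^y>\sum_x\omega_B(\vect p^x)$, and since $\vect b^y\cdot\vect q^y\le\omega_B(\vect q^y)$ we conclude $\sum_y\omega_B(\vect q^y)>\sum_x\omega_B(\vect p^x)$, i.e.\ $\Omega_B<0$. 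Rescaling $B$ into $\bbR^{D\times m}_{+,1,\downarrow}$ (legitimate because $\Omega_A$ is positively homogeneous in $A$) contradicts the hypothesis.

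The main obstacle I anticipate is exactly this last bridge: the separating functional only ``sees'' the diagonal pairing $\sum_y\vect b^y\cdot\vect q^y$, whereas the monotone $\Omega_B$ is built from the maxima $\sum_y\omega_B(\vect q^y)$, so the hyperplane does not obviously certify negativity of $\Omega_B$. The resolution is the one-line but essential inequality $\vect b^y\cdot\vect q^y\le\omega_B(\vect q^y)$, which shows that the separation already guarantees $\Omega_B<0$. Secondary care is needed to confirm that $K$ is closed (a compact set minus a closed cone), that $\Lambda\ne0$ so that $B\ne0$, and that the $i=D$ rows of the constraint automatically reproduce the marginal identity $\vect q_\rY=\vect p_\rX R$.
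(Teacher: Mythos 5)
Your proof is correct, and it reaches the theorem by a genuinely different route than the paper's. The paper never separates the point $LQ$ from a feasible set; instead it introduces the \emph{markotope} $\cM({P},k)$ \m\ the set of all joint distributions conditionally LT-majorized by ${P}$ \m\ shows that ${P}\rhd_\mathrm{c}{Q}$ is equivalent to the inclusion $\cM({Q},k)\subseteq\cM({P},k)$ for all $k$, and then invokes the fact that inclusion of compact convex sets is characterized by ordering of their support functions. The substantive computation there is the explicit evaluation $\cS_{\cM({P},k)}(A)=\sum_{x}\max_y\sum_{j}{p}_{jx}\max_{i\ge j}a_{iy}$ using the structure of lower-triangular column-stochastic matrices, after which the restriction to non-increasing columns and the reduction from general $k$ to $k=m$ are separate observations. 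Your version replaces all of this with one strong-separation argument on the down-closure $K=\{LPR\}-\bbR^{D\times m}_{+}$ together with the change of variables $\vect a=L\tp\vect b$; the bridge $\vect b^y\cdot\vect q^{y}\le\omega_B\left(\vect q^{y}\right)$ that you flag as the delicate step is exactly right, and it is what the paper's use of $\cS_{\cM({Q},k)}$ accomplishes implicitly (since ${Q}\in\cM({Q},k)$). Your ``only if'' direction is likewise a direct two-line computation rather than a consequence of the markotope machinery. What your route buys: it is more elementary, and it lands directly on $m$-column witnesses without the paper's extra observation reducing $k>m$ to $k=m$. What the paper's route buys: the markotope and its support function are reusable objects that yield the more general appendix version of the theorem for an arbitrary $V$-independent abscissa set $\tilde\sigma$, in which the quantities $\omega_A$ become genuine conditional athermality monotones rather than conversion witnesses.
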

This result provides sufficient conditions for resource conversion through an efficiently computable family of functions. Note that both $P$ and $Q$ depend on the values of the $s_i$, which in turn depend on the Lorenz curves of the target states $\vect v^{|y}$. Consequently, the quantities $\omega_A(\vect p^{x})$ depend on both the source and target, and so $\Omega_A$ is not a monotone. But our $V${\hyp}dependent choice of $s_i$ was motivated by the goal to minimize the complexity of the problem. In the appendix we will use essentially the same method to construct a sufficient family of monotones.

\subsection{Asymptotic conversion}
The asymptotic limit pertains to the following problem: Given a pair $(U,V)$ of conditional athermality resources, what is the optimal rate $\left(n_2/n_1\right)$, as $n_1\to\infty$, such that $U^{\otimes n_1}\ct V^{\otimes n_2}$ (allowing a conversion error that vanishes in the limit)? Applying ideas from the theory of asymptotic equipartition and previous results \cite{Reth} about athermality, we prove the following (details in appendix):
\begin{propm}\label{asym}
Asymptotic conversion $U^{\otimes n_1}\ct V^{\otimes n_2}$ can be carried out \emph{reversibly}, at the optimal rate
\be
\lim_{n_1\to\infty}\fr{n_2}{n_1}=\fr{{f}(U)}{{f}(V)},
\ee
where for $U\equiv\left(p_1\vect u^{|1},p_2\vect u^{|2}\dots,p_\ell\vect u^{|\ell}\right)$,
\be
{f}(U):=\sum_{x=1}^\ell p_xF_\beta\left(\vect u^{|x}\right),
\ee
with $F_\beta(\vect u):=\sum_{i=1}^d{u}_i\left(E_i+\beta^{-1}\ln{u}_i\right)$ the \emph{free energy function} of classical thermodynamics.
\end{propm}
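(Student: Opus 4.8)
The plan is to exhibit $f$ as an additive, asymptotically continuous conditional athermality monotone and to show it governs the optimal rate in both directions, whence reversibility follows. The converse (optimality) will come from monotonicity, and achievability from an asymptotic-equipartition argument that uses ordinary (unconditioned) athermality as an intermediate ``currency,'' invoking the asymptotic reversibility of the athermality theory established in \cite{Reth}.

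I first record two structural facts about $f$. Writing $D(\vect u\|\vect g):=\sum_i u_i\ln(u_i/g_i)$, one has $F_\beta(\vect u)=\beta^{-1}D(\vect u\|\vect g)-\beta^{-1}\ln Z_\rS$, so (with the usual convention $F_\beta(\vect g)=0$) $f(U)=\beta^{-1}\sum_x p_x D(\vect u^{|x}\|\vect g)$ is nonnegative and vanishes exactly on the free states, making the ratio $f(U)/f(V)$ well defined for resourceful $V$. Because $D(\cdot\|\vect g)$ is convex and non-increasing under $\vect g$-preserving stochastic maps (data processing), $F_\beta$ is a convex athermality monotone, so Proposition~\ref{mono} immediately gives that $f$ is a conditional athermality monotone. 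A one-line computation using $F_\beta(\vect a\otimes\vect b)=F_\beta(\vect a)+F_\beta(\vect b)$ yields additivity, $f(U^{\otimes n})=nf(U)$. Monotonicity then gives the converse: for any conversion $U^{\otimes n_1}\ct V'$ with $V'$ close to $V^{\otimes n_2}$, $f(V')\le f(U^{\otimes n_1})=n_1 f(U)$; combined with an asymptotic-continuity estimate for $f$ this forces $n_2 f(V)\le n_1 f(U)+o(n_1)$, i.e.\ $\limsup n_2/n_1\le f(U)/f(V)$.

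For achievability I would use the classical register precisely as in the ``unlocking'' mechanism discussed above. By the asymptotic equipartition property the register of $U^{\otimes n_1}$, distributed as $\vec p_\rX^{\otimes n_1}$, concentrates on typical strings $\vect x$ whose composition is $\approx(n_1 p_1,\dots,n_1 p_\ell)$; conditioned on such $\vect x$, S sits in $\vect u^{|\vect x}=\bigotimes_k \vect u^{|x_k}$ with free energy $F_\beta(\vect u^{|\vect x})\approx n_1 f(U)$. A CTO that reads $\vect x$ from the register and applies a string-dependent TO can therefore distill from each typical conditional state ordinary athermality of free energy $\approx n_1 f(U)$, at the free-energy rate guaranteed by \cite{Reth}. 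Running the construction backwards---preparing the target register freely as $\vec q_\rY^{\otimes n_2}$, conditioning on each typical $\vect y$, and spending $\approx n_2 F_\beta(\vect v^{|\vect y})\approx n_2 f(V)$ units of ordinary athermality to synthesize $\vect v^{|\vect y}$ on S---assembles $V^{\otimes n_2}$ with vanishing error. Free-energy bookkeeping shows the composite succeeds whenever $n_1 f(U)\ge n_2 f(V)$, matching the converse; since the identical argument runs from $V$ to $U$ at the reciprocal rate, the interconversion is reversible.

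I expect the main obstacle to be making this conditioned, measurement-based achievability fully rigorous while controlling the accumulated error. One must check that the string-dependent family of TOs, summed over register outcomes, is genuinely a CTO of the form~(\ref{dcto}); that the small composition fluctuations across typical strings (so that a few demand marginally more currency than the fixed budget) are absorbed into the $o(n)$ slack and a vanishing trace-distance error; and that the distillation and formation halves compose without the intermediate athermality currency being degraded. The asymptotic-continuity lemma needed for the converse, though standard, must be stated with care: the underlying dimension grows like $d^{\,n}$, so the continuity bound scales like $\epsilon\,n$ and only becomes negligible relative to $n$ once the conversion error $\epsilon=\epsilon_n\to0$.
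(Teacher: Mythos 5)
Your proposal is correct and follows essentially the same route as the paper's proof: typicality of the classical register so that the conditional states concentrate on compositions $\approx(n_1p_1,\dots,n_1p_\ell)$, followed by mediation through a standard athermality resource at the free-energy rates guaranteed by the asymptotic TO theory of Ref.~\cite{Reth}. You are somewhat more explicit than the paper about the converse (via additivity, monotonicity of $f$ through Proposition~\ref{mono}, and asymptotic continuity), which the paper leaves implicit in its appeal to Ref.~\cite{Reth}.
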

Whereas determining the convertibility of finite resources requires the calculation of infinitely many functions or searching through infinitely many possibilities, only one easily{\hyp}computable function suffices in the asymptotic limit. This function, namely the averaged free energy, can therefore be seen as a standard measure of asymptotic conditional athermality resourcefulness. Consequently, while the $\ct$ relation is a partial preorder in general, it turns into a \emph{total} preorder in the asymptotic limit: even if $U$ and $V$ are incomparable resources in finite numbers of copies, $U^{\otimes n/{f}(U)}$ and $V^{\otimes n/{f}(V)}$ become equally resourceful as $n\to\infty$. For this reason, the resource conversion is reversible in the asymptotic limit, unlike in the finite case.

\section{Conclusion}
We extended the existing formalism of thermal operations (TO) and the associated athermality resource theory to characterize the thermodynamic transitions achievable on a microscopic thermal system controlled through another system external to the thermal contact. Using a formalism with an explicitly bipartite system, we extended the TO model to define \emph{conditioned thermal operations} (CTO). We defined the resource under CTO as \emph{conditional athermality}, and identified some of its key properties.

In the quasiclassical limit of CTO, we developed a thorough resource theory of conditional athermality. We first found a general recipe for constructing measures of conditional athermality. We then derived necessary and sufficient conditions for single{\hyp}copy resource convertibility, both in terms of a family of efficiently{\hyp}computable monotones and as a linear program. As corollaries, we found the conditions for state{\hyp}to{\hyp}ensemble and ensemble{\hyp}to{\hyp}state conversion of athermality resources. These conditions are very similar to analogous problems for pure bipartite states under local operations and classical communication (LOCC) \cite{SLOCC,Vidal}, with the roles of initial and final states reversed and thermo{\hyp}majorization replaced by majorization. Finally, we found that the asymptotic limit of the conditional athermality resource theory is reversible. The value of every resource in this limit is determined by the classical free energy averaged over the ensemble.

At first glance, the state{\hyp}to{\hyp}ensemble case has a similar flavour to the work Ref.~\cite{AOP15}, whose authors found the greatest probability with which a given athermality resource conversion can be achieved under TO. But there are important differences: in the ``heralded probabilistic conversion'' of Ref.~\cite{AOP15}, (1) the classical register is also in the thermal environment; (2) an additional ancillary resource is allowed to particiate (without getting consumed); and (3) measurements are allowed on the thermally{\hyp}evolving systems (although the costs of the measurements are conscientiously tracked). In the simpler, unheralded case, their formalism does not involve a classical register indicating the states on the quantum system. Therefore, conditional athermality convertibility is a stricter condition than the unheralded convertibility considered in Ref.~\cite{AOP15}. Its exact relation with heralded convertibility, as well as the incorporation of measurements into the formalism, is a topic we hope to probe in the future.

Development of the athermality and conditional athermality resource theories away from the quasiclassical limit is a topic of ongoing research. Also part of future work is the use of the CTO formalism to probe the exact relations between correlations and athermality. The results of this paper barely scratch the surface, but are rather intended to be demonstrative of the richness of the conditional athermality resource theory. Further development of this resource theory would also consider catalytic conversion and approximate conversion. Going beyond the CTO formalism, it would also be useful to consider measurements with readout and back{\hyp}action.

\section*{Acknowledgments}
We thank Mile Gu, Jiajun Ma, Iman Marvian, and Jayne Thompson for helpful discussions. We acknowledge support from the Natural Sciences and Engineering Research Council of Canada (NSERC). VN acknowledges financial support from the Ministry of Education of Singapore, the National Research Foundation (NRF), NRF-Fellowship (Reference No: NRF-NRFF2016-02) and the John Templeton Foundation (Grant No 54914).

%\clearpage

\appendix

% \begin{center}
% \textbf{\large Supplementary material}
% \end{center}
%%%%%%%%%% Merge with supplemental materials %%%%%%%%%%
%%%%%%%%%% Prefix a "S" to all equations, figures, tables and reset the counter %%%%%%%%%%
\setcounter{secnumdepth}{2}
\numberwithin{equation}{section}
\makeatletter
\renewcommand{\theequation}{\thesection.\arabic{equation}}
%\newcounter{ifsup}
%\setcounter{ifsup}{1}
%\numberwithin{figure}{ifsup}
%\renewcommand{\thefigure}{S\arabic{figure}}
%\setcounter{figure}{0}
%\renewcommand{\bibnumfmt}[1]{[S#1]}
%\renewcommand{\citenumfont}[1]{S#1}

\section{Review: Thermo-majorization}
The relation of thermo-majorization is defined through a plane-geometric construction called the Lorenz curve \cite{Nan}. In the quasiclassical resource theory of thermal operations, given a fixed ambient inverse temperature $\beta$, the Lorenz curve (Fig.~\ref{LorC} of main matter) is a function of the state $\vect u$ and the Hamiltonian $H$. Since we assume $H$ fixed, the Lorenz curve is just a function of $\vect u$. It captures the way in which the state $\vect u$ differs from the Gibbs state $\vect g$ given by $g_i:=\exp\left(-\beta E_i\right)$.
\begin{definition}[Lorenz curve]
For a vector $\vect u$ with nonnegative components, index the standard energy eigenstates in such a way that
\begin{equation}\label{thL}
\fr{{u}_1}{g_1}\ge\fr{{u}_2}{g_2}\dots\ge\fr{{u}_d}{g_d}.
\end{equation}
Then, the \emph{Lorenz curve $\cL[\vect u]$} is a curve on the truncated plane $[0,1]\times[0,1]$, constructed as follows. First, mark the points
\be
(0,0),(g_1,{u}_1),(g_1+g_2,{u}_1+{u}_2)\dots,(1,u_1+u_2\dots+u_d).
\ee
$\cL[\vect u]$ is obtained by joining (with straight-line segments) adjacent pairs in this sequence of $(d+1)$ points. For every abscissa $s\in[0,1]$, there is a unique point $(s,t)$ on the curve $\cL[\vect u]$. Therefore, we can express the curve by specifying $t$ as a function of $s$. We use the same notation for this function (``the Lorenz function of $\vect u$'') as for the curve itself:
\be
t=\cL[\vect u](s).
\ee
\end{definition}
Note that sub-normalized vectors also have Lorenz curves and functions as per our definition. For normalized vectors (distributions), which represent states of S in our formalism, we define the following relation:
\begin{definition}[Thermo-majorization]
Given states $\vect u$ and $\vect v$, $\vect u$ is said to \emph{thermo-majorize} $\vect v$ if the thermo-Lorenz curve $\cL[\vect u]$ is nowhere below $\cL[\vect v]$. Throughout the supplementary material, we will abbreviate this condition as $\cL[\vect u]\ge\cL[\vect v]$. If $\cL[\vect u]\le\cL[\vect v]$ also holds, we will say $\cL[\vect u]=\cL[\vect v]$.
\end{definition}
Fig.~\ref{ThMaj} illustrates thermo-majorization through examples. In Ref.~\cite{Nan}, it was proved that $\vect u\tho\vect v$ if and only if $\cL[\vect u]\ge\cL[\vect v]$. Before we proceed, we note some general properties of the thermo-Lorenz construction.
\begin{figure*}[t]
    \includegraphics[width=\textwidth]{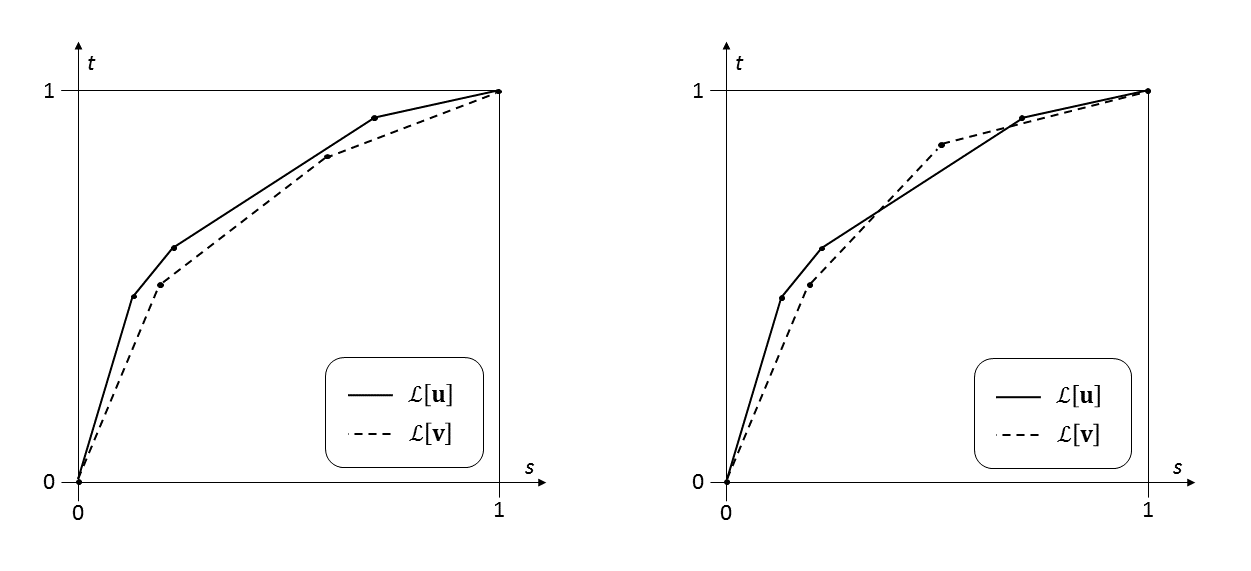}
    \caption[justification=justified]{Thermo-majorization: On the left, $\vect u$ thermo-majorizes $\vect v$; on the right, neither state thermo-majorizes the other.}
    \label{ThMaj}
\end{figure*}

The indexing (\ref{thL}) ensures that the curve is concave. It is also continuous and piecewise linear by construction, with at most $(d-1)$ bends for a state on a $d$-dimensional system. We can use a succinct representation wherein we describe a Lorenz curve by specifying the coordinates of its bends. The construction is also efficiently computable from matrix representations of the density operator and the Hamiltonian. Two states $(\vect u_1,H_1)$ and $(\vect u_2,H_2)$ (momentarily allowing the Hamiltonian to vary) that have the same thermo-Lorenz curve can be considered equivalent under TO, or \emph{TO-equivalent}, because either state can be converted to the other. In order to ascertain whether $\cL[\vect u]\ge\cL[\vect v]$, we have to compare the two curves only at the points where $\cL[\vect v]$ bends. Consequently, the predicate ``$\cL[\vect u]\ge\cL[\vect v]$" is equivalent to a number of scalar inequalities equal to the number of bends in $\cL[\vect v]$.

\section{Proofs of CTO results}
\noindent Let us begin by recalling the definition of CTO:
\begin{definition}[CTO]
Given a $d\times\ell$ joint state $U\equiv U_{\rS\rX}$, a CTO is an operation determined by an indexed family $\left(T^{(j)}\right)_{j\in\{1,2\dots,n\}}$ of TO ($\vect g$-preserving $d\times d$ column-stochastic matrices), and a set of corresponding $\ell\times m$ row-substochastic matrices $\left({R}^j\right)_{j\in\{1,2\dots,n\}}$, such that ${R}:=\sum_{j=1}^n{R}^j$ is row-stochastic. The effect of the operation on $U$ is
\be\label{ctj}
U\mapsto\sum_{j=1}^nT^{(j)}U{R}^j.
\ee
\end{definition}
\begin{obs}\label{jxy}
Without loss of generality, we can choose a decomposition where the index $j$ is replaced by pairs $(x,y)$, with $x\in\{1,2\dots,\ell\}$ and $y\in\{1,2\dots,m\}$. For, given a decomposition (\ref{ctj}), let ${R}_{xy}$ denote the $(x,y)^\textnormal{th}$ element of the matrix ${R}$. For each pair $(x,y)$, let $\tilde{R}^{x,y}$ denote the $\ell\times m$ matrix that has ${R}_{xy}$ as its $(x,y)$ element and zeroes everywhere else. Now define the family $\left(\tilde T^{(x,y)}\right)_{x,y}$ through
\be
\tilde T^{(x,y)}:=\fr1{{R}_{xy}}\sum_{j=1}^n{R}^j_{xy}T^{(j)}.
\ee
Each such matrix is a convex combination of TO, and is therefore itself a TO. Moreover, $\sum_{x,y}\tilde{R}^{x,y}={R}$. It can be verified that
\be
\sum_{x,y}\tilde T^{(x,y)}(\cdot)\tilde{R}^{x,y}=\sum_{j=1}^nT^{(j)}(\cdot){R}^j,
\ee
and so the LHS defines an alternative decomposition of the same CTO.\qed
\end{obs}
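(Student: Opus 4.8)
The plan is to collapse the original sum index $j$ by grouping together all contributions that send classical setting $x$ to setting $y$, producing a decomposition in which each substochastic matrix has support on a single entry. First I would set $R:=\sum_{j=1}^n R^j$ and record its entries $R_{xy}$; these are the only features of the substochastic part that can survive once each new term is required to involve just one transition $x\to y$. I would then take $\tilde R^{x,y}$ to be the matrix carrying the value $R_{xy}$ in position $(x,y)$ and zeros elsewhere, and assign to that transition the weighted average $\tilde T^{(x,y)}:=R_{xy}^{-1}\sum_{j=1}^n R^j_{xy}\,T^{(j)}$ of the original thermal operations.

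Three things then need checking. For the first, the weights $R^j_{xy}/R_{xy}$ are nonnegative and sum to one, so $\tilde T^{(x,y)}$ is a convex combination of TO; since the $\vect g$-preserving column-stochastic matrices form a convex set (nonnegativity, unit column sums, and the fixed-point condition $T\vect g=\vect g$ all survive convex mixing), $\tilde T^{(x,y)}$ is itself a TO. The second, $\sum_{x,y}\tilde R^{x,y}=R$, is immediate and shows the new substochastic pieces add up to the required row-stochastic $R$. For the third, I would verify that the two decompositions define the same map by comparing columns: since $\tilde R^{x,y}$ is supported only at $(x,y)$, the $y$-th column of $U\tilde R^{x,y}$ is $R_{xy}$ times the $x$-th column $U_{\cdot x}$ of $U$ while all other columns vanish, so applying $\tilde T^{(x,y)}$ and summing over $(x,y)$ yields a $y$-th column equal to $\sum_x\sum_j R^j_{xy}\,T^{(j)}U_{\cdot x}$, the factor $R_{xy}$ cancelling the normalization in $\tilde T^{(x,y)}$. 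Expanding the $y$-th column of the original map $\sum_j T^{(j)}U R^j$ gives precisely the same expression, so the two maps agree.

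The one point I would address at the outset is the degenerate case $R_{xy}=0$, where the average defining $\tilde T^{(x,y)}$ is indeterminate; but then $\tilde R^{x,y}=0$ as well, so that term contributes nothing and $\tilde T^{(x,y)}$ may be fixed arbitrarily. Beyond this caveat the argument is essentially bookkeeping, and the only place that carries genuine content is the convexity of the TO class invoked above, which is itself elementary; I would therefore expect the column-by-column verification of map equality, rather than any conceptual step, to be the most error-prone part to carry out carefully.
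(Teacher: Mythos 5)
Your proposal is correct and follows exactly the same construction as the paper: the same grouping of the index $j$ into pairs $(x,y)$, the same matrices $\tilde{R}^{x,y}$ and convex combinations $\tilde{T}^{(x,y)}$, and the same column-stochasticity/convexity argument. You go slightly further than the paper by carrying out the column-by-column verification that the paper leaves as ``it can be verified'' and by explicitly handling the degenerate case $R_{xy}=0$, both of which are sound and welcome additions.
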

\begin{coro}\label{ptq}
For $U\equiv\left(p_1\vect u^{|1},p_2\vect u^{|2}\dots,p_\ell\vect u^{|\ell}\right)\equiv\left(\vect u^{1},\vect u^{2}\dots,\vect u^{\ell}\right)$ and $V\equiv\left(q_1\vect v^{|1},q_2\vect v^{|2}\dots,q_m\vect v^{|m}\right)\equiv\left(\vect v^{1},\vect v^{2}\dots,\vect v^{m}\right)$, $U\ct V$ if and only if there exists a family $\left(T^{(x,y)}\right)_{x,y}$ of TO, and an $\ell\times m$ row-stochastic matrix ${R}$, such that for each $y\in\{1,2\dots,m\}$,
\be
\vect v^{|y}=\sum_{x=1}^\ell r_{x|y}T^{(x,y)}\vect u^{|x},
\ee
where $r_{x|y}:=p_x{R}_{xy}/q_y$.
\end{coro}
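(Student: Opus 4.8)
The plan is to prove the corollary by direct computation, leveraging Observation~\ref{jxy} to replace the $j$-indexed decomposition of an arbitrary CTO by the equivalent $(x,y)$-indexed one. Since that observation already establishes that every CTO admits a decomposition of the form $\sum_{x,y}T^{(x,y)}(\cdot)\tilde R^{x,y}$, where $\tilde R^{x,y}$ is the $\ell\times m$ matrix with ${R}_{xy}$ in position $(x,y)$ and zeros elsewhere, I may assume this form from the outset and reduce the statement to a column-by-column identity.

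First I would compute the action of a single term $T^{(x,y)}U\tilde R^{x,y}$. Because $\tilde R^{x,y}$ has exactly one nonzero entry, right-multiplication by it annihilates every column of the product except the $y$-th, which becomes ${R}_{xy}$ times the $x$-th column $\vect u^{x}=p_x\vect u^{|x}$ of $U$. Applying $T^{(x,y)}$ then gives a matrix whose only nonzero column is column $y$, equal to ${R}_{xy}\,p_x\,T^{(x,y)}\vect u^{|x}$. Summing over all $(x,y)$ collects these contributions column by column, so that the $y$-th column of $\cE(U)$ is $\sum_{x=1}^\ell {R}_{xy}\,p_x\,T^{(x,y)}\vect u^{|x}$.

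Next I would impose $\cE(U)=V$. Matching the $y$-th column of $\cE(U)$ to the $y$-th column $\vect v^{y}=q_y\vect v^{|y}$ of $V$ and dividing through by $q_y$ yields precisely $\vect v^{|y}=\sum_x r_{x|y}T^{(x,y)}\vect u^{|x}$ with $r_{x|y}=p_x{R}_{xy}/q_y$, which is the forward implication; the converse is obtained by running the same identities backward, reconstructing the single-entry matrices $\tilde R^{x,y}$ from a given ${R}$ and family $\left(T^{(x,y)}\right)$ and verifying that the resulting operation is a genuine CTO, i.e.\ that $\sum_{x,y}\tilde R^{x,y}={R}$ is row-stochastic, which holds by hypothesis.

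The only point requiring care \m and the step I expect to be the main subtlety rather than a genuine obstacle \m is confirming that the coefficients $r_{x|y}$ form a bona fide conditional distribution, i.e.\ that $\sum_x r_{x|y}=1$. This is not imposed by hand but follows from normalization: summing the entries of the column-$y$ vector identity and using that each $\vect v^{|y}$ is normalized while each $T^{(x,y)}$, being column-stochastic, maps the normalized $\vect u^{|x}$ to a normalized vector, one finds $q_y=\sum_x p_x{R}_{xy}$, i.e.\ the expected classical marginal condition $\vec q_\rY=\vec p_\rX{R}$. This both legitimizes the division by $q_y$ (for $q_y>0$) and shows the $r_{x|y}$ sum to one, so the decomposition is internally consistent in both directions.
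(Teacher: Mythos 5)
Your proof is correct and follows exactly the route the paper intends: the corollary is stated as an immediate consequence of Observation~\ref{jxy}, and your column-by-column computation of $\sum_{x,y}T^{(x,y)}U\tilde{R}^{x,y}$, together with the normalization argument showing $q_y=\sum_xp_x{R}_{xy}$, simply makes explicit the steps the paper leaves to the reader. The only caveat worth flagging is the degenerate case $q_y=0$, where the division is vacuous because the corresponding column of $V$ vanishes; you note this implicitly and it causes no difficulty.
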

The conditions in the above result include existential clauses invoking several objects: the matrix ${R}$, and the family of TO $T^{(x,y)}$. We will eventually reduce the conditions to a form where only the existence of ${R}$ is invoked, but first we prove Proposition~\ref{mono} about CTO monotones:
\begin{prop}[Proposition~\ref{mono} of main text] Let $\phi(\vect u_{\rS})$ be a \emph{convex TO monotone} on $\rS$. That is, $\phi(T\vect u_{\rS})\le\phi(\vect u_{\rS})$ for all states $\vect u_\rS$ of $\rS$ and all TO ($\vect g$-preserving column-stochastic matrices) $T$, and furthermore, $\phi\left(\alpha\vect u+[1-\alpha]\vect v\right)\le\alpha\phi\left(\vect u\right)+(1-\alpha)\phi\left(\vect v\right)$ for all states $(\vect u,\vect v)$ and $\alpha\in[0,1]$. For every bipartite state $U_{\rS\rX}\equiv\left(p_1\vect u^{|1},p_2\vect u^{|2}\dots,p_\ell\vect u^{|\ell}\right)$, define
\be
\Phi\left[U_{\rS\rX}\right]:=\sum_{x=1}^\ell p_x\phi\left(\vect u^{|x}\right).
\ee
Then $\Phi$ is a CTO monotone.
\end{prop}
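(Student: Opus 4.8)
The plan is to read off the action of a CTO on the conditional states directly from Corollary~\ref{ptq}. Suppose $\Phi$ is to be tested on $U$ and its image $V=\cE(U)$ under an arbitrary CTO, where $U\equiv\left(p_1\vect u^{|1}\dots,p_\ell\vect u^{|\ell}\right)$ and $V\equiv\left(q_1\vect v^{|1}\dots,q_m\vect v^{|m}\right)$. Since $U\ct V$, Corollary~\ref{ptq} supplies a row-stochastic ${R}$ and a family of TO $T^{(x,y)}$ with $\vect v^{|y}=\sum_{x=1}^\ell r_{x|y}\,T^{(x,y)}\vect u^{|x}$, where $r_{x|y}={p_x{R}_{xy}}/{q_y}$. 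The first observation I would establish is that for each fixed $y$ the weights $\left(r_{x|y}\right)_x$ form a genuine probability distribution: using $q_y=\sum_x p_x{R}_{xy}$ we get $\sum_x r_{x|y}=q_y/q_y=1$. Hence every output conditional state $\vect v^{|y}$ is a \emph{convex mixture} of TO-images $T^{(x,y)}\vect u^{|x}$ of the input conditional states.

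With that structural fact in hand, I would bound each $\phi\left(\vect v^{|y}\right)$ in two moves. First, convexity of $\phi$ together with the normalization $\sum_x r_{x|y}=1$ gives $\phi\left(\vect v^{|y}\right)\le\sum_x r_{x|y}\,\phi\left(T^{(x,y)}\vect u^{|x}\right)$. Second, because each $T^{(x,y)}$ is a TO and $\phi$ is a TO monotone, $\phi\left(T^{(x,y)}\vect u^{|x}\right)\le\phi\left(\vect u^{|x}\right)$. Chaining the two yields the pointwise estimate $\phi\left(\vect v^{|y}\right)\le\sum_x r_{x|y}\,\phi\left(\vect u^{|x}\right)$. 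This is the crux: both hypotheses on $\phi$ are used exactly once, convexity to enter the mixture and monotonicity to discard the TO.

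The remaining step is pure bookkeeping. Weighting by $q_y$ and summing over $y$, and using $q_y r_{x|y}=p_x{R}_{xy}$, I obtain
\be
\Phi[V]=\sum_{y=1}^m q_y\,\phi\left(\vect v^{|y}\right)\le\sum_{x=1}^\ell\sum_{y=1}^m p_x{R}_{xy}\,\phi\left(\vect u^{|x}\right)=\sum_{x=1}^\ell p_x\,\phi\left(\vect u^{|x}\right)=\Phi[U],
\ee
where the penultimate equality collapses the $y$-sum via the row-stochasticity $\sum_y{R}_{xy}=1$. This establishes $\Phi[V]\le\Phi[U]$ for every CTO $\cE$, which is the claim.

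I do not expect a genuine obstacle here; the only care needed is the probabilistic accounting just described—confirming that $\left(r_{x|y}\right)_x$ normalizes so that convexity is legitimately applicable, and that the products $q_y r_{x|y}$ recombine into $p_x{R}_{xy}$ so that row-stochasticity can collapse the sum. One minor technical caveat I would flag: if some $q_y=0$ the conditional state $\vect v^{|y}$ is undefined, but then the corresponding term $q_y\phi\left(\vect v^{|y}\right)$ contributes nothing and that index may simply be omitted from the sums above, so the argument is unaffected.
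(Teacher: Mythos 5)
Your proof is correct and follows essentially the same route as the paper's: invoke Corollary~\ref{ptq} to write each $\vect v^{|y}$ as a convex mixture of TO-images of the $\vect u^{|x}$, apply convexity then TO-monotonicity of $\phi$, and collapse the double sum via row-stochasticity of $R$. Your extra check that $\left(r_{x|y}\right)_x$ normalizes and the remark on the $q_y=0$ case are sensible additions but do not change the argument.
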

\begin{proof}
Suppose $U\equiv\left(p_1\vect u^{|1},p_2\vect u^{|2}\dots,p_\ell\vect u^{|\ell}\right)$ and $V\equiv\left(q_1\vect v^{|1},q_2\vect v^{|2}\dots,q_m\vect v^{|m}\right)$ such that $U\ct V$. By Corollary~\ref{ptq} there exists a family $\left(T^{(x,y)}\right)_{x,y}$ of TO, and an $\ell\times m$ row-stochastic matrix ${R}$, such that for each $y\in\{1,2\dots,m\}$,
\be
\vect v^{|y}=\sum_{x=1}^\ell r_{x|y}T^{(x,y)}\vect u^{|x},
\ee
where $r_{x|y}:=p_x{r}_{xy}/q_y$. Note that each $\vect q^{|y}$ is a convex combination of various $T^{(x,y)}\vect u^{|x}$. We therefore have, for a convex TO monotone $\phi$,
\begin{align}
\sum_{y=1}^{m}q_{y}\phi\left(\vect v^{|y}\right)&=\sum_{y=1}^{m}q_{y}\phi\left(\sum_{x=1}^{\ell}r_{x|y}T^{(x,y)}\vect u^{|x}\right)\nonumber\\
&\le\sum_{x=1}^{\ell}\sum_{y=1}^{m}q_{y}r_{x|y}\phi\left(T^{(x,y)}\vect u^{|x}\right)\nonumber\\
&=\sum_{x=1}^{\ell}p_x\sum_{y=1}^{m}r_{xy}\phi\left(T^{(x,y)}\vect u^{|x}\right)\nonumber\\
&\le\sum_{x=1}^{\ell}p_x\sum_{y=1}^{m}r_{xy}\phi\left(\vect u^{|x}\right)\nonumber\\
&=\sum_{x=1}^{\ell}p_x\phi\left(\vect p^{|x}\right)\sum_{y=1}^{m}{r}_{xy}\nonumber\\
&=\sum_{x=1}^{\ell}p_x\phi\left(\vect p^{|x}\right).
\end{align}
The first inequality follows from the convexity of $\phi$, and the second one from its monotonicity under TO.
\end{proof}
We now work towards our main result by proving some useful results about Lorenz curves and thermo-majorization.
\begin{alem}\label{lemlc}
Under a fixed Hamiltonian, the Lorenz curve is a convex property of the state. That is,
\be
\cL\left[\sum_jr_j\vect w^j\right]\le\sum_jr_j\cL\left[\vect w^j\right]
\ee
for any probability distribution $\vect r$ and collection $(\vect w^j)$ of states.
\end{alem}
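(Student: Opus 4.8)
The plan is to reduce the claim to the convexity, at each fixed abscissa, of the Lorenz function as a function of the vector, which in turn follows from a variational (linear-programming) characterization of the Lorenz function. First I would establish that for every $s\in[0,1]$,
\be
\cL[\vect w](s)=\max\left\{\vect c\cdot\vect w:\vect c\in[0,1]^d,\ \vect c\cdot\vect g=s\right\}.
\ee
This is a fractional-knapsack optimization: to maximize $\sum_i c_i w_i$ while spending a fixed Gibbs budget $\vect c\cdot\vect g=s$, one should greedily allocate weight to the indices with the largest ratios $w_i/g_i$ first. Since the defining sort~(\ref{thL}) orders the indices precisely by decreasing $w_i/g_i$, the greedy optimum fills the first few coordinates to $1$, assigns a fractional value to the next, and leaves the rest at $0$; its value traces out exactly the piecewise-linear curve through the bend points $(g_1,w_1),(g_1+g_2,w_1+w_2),\dots$, and at intermediate $s$ it gives the linear interpolation, i.e.\ $\cL[\vect w](s)$. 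A short exchange argument confirms optimality of the greedy allocation.

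With this characterization in hand, convexity is essentially automatic. For fixed $s$ the feasible set $\{\vect c\in[0,1]^d:\vect c\cdot\vect g=s\}$ does not depend on $\vect w$, so $\cL[\cdot](s)$ is a pointwise maximum of the linear functionals $\vect w\mapsto\vect c\cdot\vect w$, hence convex in $\vect w$. Concretely, writing $\vect w:=\sum_j r_j\vect w^j$ and letting $\vect c^\star$ attain the maximum for $\vect w$, the same $\vect c^\star$ is feasible (though not necessarily optimal) for each individual $\vect w^j$, so
\be
\cL[\vect w](s)=\vect c^\star\cdot\vect w=\sum_j r_j\,\vect c^\star\cdot\vect w^j\le\sum_j r_j\,\cL[\vect w^j](s).
\ee
As this holds for every $s\in[0,1]$, it is exactly the claimed inequality $\cL\left[\sum_j r_j\vect w^j\right]\le\sum_j r_j\cL[\vect w^j]$ between Lorenz functions.

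The only real content is the variational formula, and in particular the optimality of the greedy allocation; this is where I would spend the care. A useful point to flag is that the feasible region is independent of any sorting, so the fact that the individual states $\vect w^j$ and their mixture $\vect w$ are generically sorted in \emph{different} orders causes no difficulty: the maximum $\max\{\vect c\cdot\vect w^j:\vect c\in[0,1]^d,\ \vect c\cdot\vect g=s\}$ equals $\cL[\vect w^j](s)$ regardless of how $\vect w^j$ happens to be indexed. One should also verify the degenerate cases, but these do not affect the argument: ties in the ratios $w_i/g_i$ leave the curve (and hence its value) unchanged, and sub-normalized vectors with $\sum_i w_i<1$ only lower the right endpoint of the curve, which the linear program accommodates automatically since at $s=1$ the sole feasible allocation is $c_i=1$ for all $i$, yielding the value $\sum_i w_i$.
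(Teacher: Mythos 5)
Your proof is correct, but it takes a genuinely different route from the paper's. The paper first restricts to a Hamiltonian whose Gibbs weights are mutually rational, embeds each state $\vect w$ into a $\tilde d$-dimensional system with \emph{uniform} Gibbs state so that its Lorenz curve becomes the ordinary partial-sum curve of the decreasingly sorted vector, invokes the classical fact that each partial sum $\sum_{i\le k}\tilde{w}_i^\downarrow$ is a maximum of linear functionals and hence convex in the state, and finally passes to general Hamiltonians by an approximation argument. You instead prove the variational identity $\cL[\vect w](s)=\max\left\{\vect c\cdot\vect w:\vect c\in[0,1]^d,\ \vect c\cdot\vect g=s\right\}$ directly via the fractional-knapsack greedy/exchange argument, after which convexity at each fixed $s$ is immediate because the feasible set is independent of $\vect w$, so $\cL[\cdot](s)$ is a pointwise maximum of linear functionals. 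Your route is more self-contained and arguably tighter at the margins: it handles arbitrary (not necessarily Gibbs-rational) Hamiltonians in one step and avoids the paper's closing density argument, which tacitly relies on continuity of the Lorenz curve in the Gibbs weights and is left unspelled. What the paper's embedding buys in exchange is reusability: essentially the same $\tilde d$-dimensional construction powers Lemma~\ref{lemlm}, where one needs states whose Lorenz curves \emph{add exactly} under mixing, something your inequality-only characterization does not deliver. Your treatment of the degenerate cases (ties in the ratios $w_i/g_i$, sub-normalized vectors, the endpoint $s=1$) is sound, and the one step you rightly flag as needing care --- optimality of the greedy allocation --- is the standard exchange argument and goes through since all $g_i>0$ and all $w_i\ge0$.
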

\begin{proof}
For the purpose of this proof we will have to consider systems with different Hamiltonians, and therefore Lorenz curves as functions of both states and Hamiltonians. First consider the case of a Hamiltonian $H$ whose associated Gibbs distribution has components $g_i=\exp\left(-\beta E_i\right)/Z$ that are mutually rational. Let's call such a Hamiltonian \emph{Gibbs-rational}. We first find the greatest common divisor $g$ of all the $g_i$'s, and define the integer
\be
\tilde d:=\fr{\sum_{i=1}^dg_i}g.
\ee
For an arbitrary state $\vect w$ under $H$, we can always find a $\tilde d$-dimensional state $\tilde{\vect w}$ whose Lorenz curve $\cL[\tilde{\vect w},\tilde H]$ under the trivial Hamiltonian $\tilde H:=\eins_{\tilde d}$ is identical to $\cL[\vect w,H]$ (Fig.~\ref{GRat}). The components of $\tilde{\vect w}$ are of the form $\tilde{w}_i:={w}_ig/g_i$, with each $\tilde{w}_i$ repeated $g_i/g$ times. This construction commutes with convex combination. That is, if $\vect w=\sum_jr_j\vect w^j$, then the $\tilde d$-dimensional states $\tilde{\vect w}$ and $\tilde{\vect w}^j$ constructed in the above manner satisfy $\tilde{\vect w}=\sum_jr_j\tilde{\vect w}^j$.

Now, since the $\tilde d$-dimensional Gibbs state is just the uniform distribution, the ordinates of the bends in the Lorenz curve of a state $\tilde{\vect w}$ are given by the partial sums of $\tilde{\vect w}^\downarrow$, the vector obtained by arranging the components of $\tilde{\vect w}$ in nonincreasing order: $\tilde v_1=\tilde{w}_1^\downarrow$, $\tilde v_2=\tilde{w}_1^\downarrow+\tilde{w}_2^\downarrow$, etc.

Under a trivial Hamiltonian (i.e., one whose Gibbs distribution is uniform), the convexity of the Lorenz curve as a function of the state follows from the convexity of these partial sums. Using our construction, the property extends to any Gibbs-rational Hamiltonian. Since a general Hamiltonian can be approximated arbitrarily well by a Gibbs-rational Hamiltonian, the lemma follows.
\end{proof}
\begin{figure*}[t]
    \includegraphics[width=\textwidth]{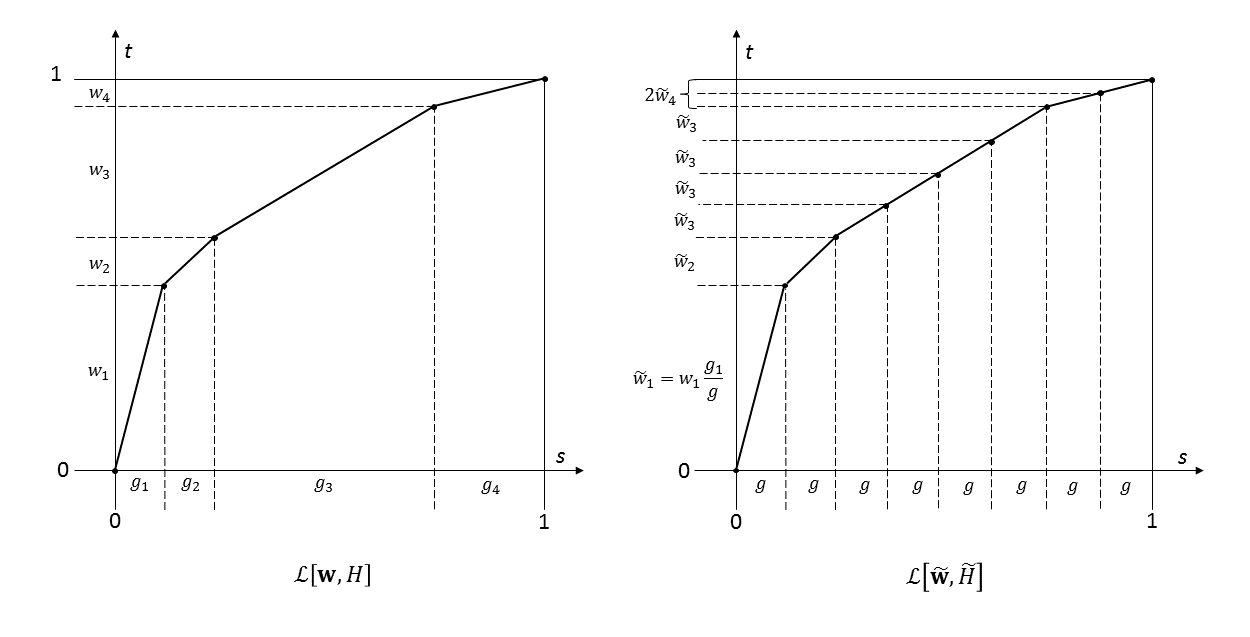}
    \caption[justification=justified]{If $H$ is a Hamiltonian whose Gibbs state $\vect g$ has mutually rational components, then for every $\vect w$ under $H$ we can find a higher-dimensional state $\tilde{\vect w}$ under a Hamiltonian $\tilde H$ whose Gibbs state is uniform, such that $\cL[\tilde{\vect w},\tilde H]=\cL[\vect w,H]$.}
    \label{GRat}
\end{figure*}
\begin{alem}\label{lemlm}
For any finite collection $\left(\vect w^j\right)_{j\in\{1\dots n\}}$ of normalized states on a $d$-dimensional system under Hamiltonian $H$, there exist $D$-dimensional states $\left(\tilde{\vect w}^j\right)_{j\in\{1\dots n\}}$ under some Hamiltonian $\tilde H$, where $D=\cO(nd)$, such that:
\begin{enumerate}
\item For each $j\in\{1\dots n\}$,
\be\label{loreq}
\cL\left[\tilde{\vect w}^j,\tilde H\right]=\cL\left[\vect w^j,H\right];
\ee
\item For any $n$-dimensional probability distribution $\vect r$, the thermo-Lorenz curve of the state
\be
\bar{\vect w}:=\sum_jr_j\tilde{\vect w}^j
\ee
under $\tilde H$ is given by
\be\label{lorcon}
\cL\left[\bar{\vect w},\tilde H\right]=\sum_jr_j\cL\left[\tilde{\vect w}^j,\tilde H\right]=\sum_jr_j\cL\left[\vect w^j,H\right].
\ee
\end{enumerate}
\end{alem}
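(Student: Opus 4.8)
The plan is to exploit a freedom that Condition~1 grants but that a naive ``faithful'' embedding ignores: since two states with a common thermo-Lorenz curve are TO-equivalent, we may replace each $\vect w^j$ by \emph{any} representative having the same curve, and we can choose all these representatives to live on a single, cleverly designed system $(\tilde H, D)$ on which they are simultaneously ``co-monotone.'' The entire difficulty is to pick $\tilde H$ so that every curve $\cL[\vect w^j, H]$ is realized by a state whose ratio profile (components divided by Gibbs weights) is already sorted in one common order; once that holds, the additivity in Condition~2 is automatic, because a convex combination of commonly-sorted vectors is still sorted in that same order, and equality replaces the inequality of Lemma~\ref{lemlc}.

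Concretely, I would first collect the bend abscissae of all $n$ curves $\cL[\vect w^j, H]$ and take their common refinement $0 = a_0 < a_1 < \dots < a_K = 1$; since each curve has at most $d-1$ bends, $K \le n(d-1)+1$, giving $D := K = \cO(nd)$ as required. I would then \emph{define} $\tilde H$ through its Gibbs weights $\tilde g_l := a_l - a_{l-1}$ (a legitimate choice, since any strictly positive weights summing to $1$ arise from some Hamiltonian at inverse temperature $\beta$), and define the embedded states by the increments $\tilde w^j_l := \cL[\vect w^j,H](a_l) - \cL[\vect w^j,H](a_{l-1})$. These are nonnegative because each Lorenz function is non-decreasing, and they sum to $\cL[\vect w^j,H](1) = 1$, so each $\tilde{\vect w}^j$ is a genuine normalized state on the $D$-dimensional system.

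The crux is the observation that the ratio $\tilde w^j_l / \tilde g_l$ is exactly the \emph{slope} of $\cL[\vect w^j, H]$ on $[a_{l-1}, a_l]$; by concavity of the Lorenz curve these slopes are non-increasing in $l$, so for every $j$ the index order $1, 2, \dots, K$ already sorts $\tilde{\vect w}^j$ by decreasing ratio. Hence the thermo-Lorenz construction on $(\tilde H, D)$ places the bends of $\tilde{\vect w}^j$ at $\big(\sum_{l' \le l} \tilde g_{l'},\, \sum_{l' \le l} \tilde w^j_{l'}\big) = \big(a_l, \cL[\vect w^j,H](a_l)\big)$, i.e. at points lying on the original curve, and joining them reproduces $\cL[\vect w^j,H]$ exactly, giving Condition~1. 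For Condition~2, for any $\vect r$ the mixture $\bar{\vect w} = \sum_j r_j \tilde{\vect w}^j$ has ratio $\bar w_l / \tilde g_l = \sum_j r_j\,(\text{slope}^j_l)$, a convex combination of non-increasing sequences and hence itself non-increasing in $l$; so $\bar{\vect w}$ is sorted in the \emph{same} order, its bends sit at $\big(a_l, \sum_j r_j \cL[\vect w^j,H](a_l)\big)$, and $\cL[\bar{\vect w},\tilde H] = \sum_j r_j \cL[\vect w^j, H]$ follows. I expect the only genuine obstacle to be conceptual rather than computational: recognizing that faithful or block-diagonal embeddings cannot work (they generically strictly lower the mixture's curve, in line with Lemma~\ref{lemlc}) and that the correct move is to let the refined Gibbs weights encode the abscissae while concavity supplies a single common sorting order; the remaining verifications are routine.
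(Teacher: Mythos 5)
Your proposal is correct and is essentially identical to the paper's own proof: both take the common refinement of all bend abscissae, define $\tilde H$ by Gibbs weights equal to the gaps, define $\tilde{\vect w}^j$ by the increments of the Lorenz functions, and use concavity to get a single common sorting order so that both the individual curves and the mixture's curve are reproduced exactly. Your slope interpretation of $\tilde w^j_l/\tilde g_l$ just makes explicit the step the paper summarizes as following ``from the properties of the Lorenz curves.''
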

\begin{proof}
Let $0\equiv s_0<s_1<s_2\dots<s_{D-1}<s_D\equiv1$ be the collection of the distinct absciss{\ae} at which the various Lorenz curves $\cL\left[\vect w^j,H\right]$ bend (or terminate). Note that $D\le n(d-1)+1=\cO(nd)$.

Let $\tilde H$ be a $D$-dimensional Hamiltonian with an energy spectrum $(\tilde E_1,\tilde E_2\dots,\tilde E_D)$ satisfying
\be
\tilde g_i\equiv\exp\left(-\beta\tilde E_i\right)=s_i-s_{i-1}
\ee
for all $i\in\{1,2\dots,D\}$, and let $\{\ket i\}$ be an orthonormal basis of associated eigenvectors.

For each $j\in\{1,2\dots,n\}$ and $i\in\{1,2\dots,D\}$, define
\be
\tilde{w}^j_i:=\cL\left[\vect w^j,H\right]\left(s_i\right)-\cL\left[\vect w^j,H\right]\left(s_{i-1}\right).
\ee
From the properties of the Lorenz curves $\cL\left[\vect w^j,H\right]$, it follows that
\be\label{lambj}
\forall j,\;\tilde{w}_1^j/\tilde g_1\ge\tilde{w}_2^j/\tilde g_2\dots\ge\tilde{w}_D^j/\tilde g_D. 
\ee
Eq.~(\ref{loreq}) follows. Furthermore, for any distribution $\vect r$, if $\bar{\vect w}=\sum_{j=1}^nr_j\tilde{\vect w}^j$, we have from (\ref{lambj})
\be
\bar{w}_1/\tilde g_1\ge\bar{w}_2/\tilde g_2\dots\ge\bar{w}_D/\tilde g_D. 
\ee
Hence, Eq.~(\ref{lorcon}) follows.
\end{proof}
\begin{prop}
Given a family $\left(\vect w^1,\vect w^2\dots,\vect w^n\right)$ of normalized states, a target state $\vect w$, and an $n$-dimensional probability distribution $\vect r$, the following two conditions are equivalent:
\begin{enumerate}
\item\label{eth} There exist TO $T^{(1)},T^{(2)}\dots,T^{(n)}$ such that
\be\label{vru}
\vect w=\sum_{j=1}^nr_jT^{(j)}\vect w^j.
\ee
\item\label{con2l} The Lorenz curves of the given states satisfy
\be\label{vrul}
\cL[\vect w]\le\sum_{j=1}^nr_j\cL[\vect w^j].
\ee
\end{enumerate}
\end{prop}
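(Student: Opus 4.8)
The plan is to establish the two implications separately, treating $(1)\Rightarrow(2)$ as the routine direction and $(2)\Rightarrow(1)$ as the crux. For the forward direction I would start from a decomposition $\vect w=\sum_j r_jT^{(j)}\vect w^j$ and combine the two facts already at hand. Each TO $T^{(j)}$ can only move a state down in the thermo-majorization order, so by the characterization of Ref.~\cite{Nan} we have $\cL[T^{(j)}\vect w^j]\le\cL[\vect w^j]$ for every $j$. Then the convexity of the Lorenz curve (Lemma~\ref{lemlc}) gives $\cL[\vect w]=\cL[\sum_j r_jT^{(j)}\vect w^j]\le\sum_j r_j\cL[T^{(j)}\vect w^j]\le\sum_j r_j\cL[\vect w^j]$, which is exactly the inequality~(\ref{vrul}). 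This chain uses nothing beyond monotonicity and convexity, so I expect it to be immediate.

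The reverse direction $(2)\Rightarrow(1)$ is where the work lies, and the key idea is to replace the merely-convex inequality of Lemma~\ref{lemlc} by the \emph{exact} equality supplied by Lemma~\ref{lemlm}. Given the hypothesis $\cL[\vect w]\le\sum_j r_j\cL[\vect w^j]$, I would first invoke Lemma~\ref{lemlm} to produce $D$-dimensional states $\tilde{\vect w}^j$ under an auxiliary Hamiltonian $\tilde H$ with $\cL[\tilde{\vect w}^j,\tilde H]=\cL[\vect w^j,H]$, whose convex combination $\bar{\vect w}:=\sum_j r_j\tilde{\vect w}^j$ has Lorenz curve exactly $\cL[\bar{\vect w},\tilde H]=\sum_j r_j\cL[\vect w^j,H]$. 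Because the $\tilde{\vect w}^j$ share Lorenz curves with the $\vect w^j$, the cross-Hamiltonian TO-equivalence noted in the appendix furnishes TO maps $A^{(j)}$ with $\tilde{\vect w}^j=A^{(j)}\vect w^j$, so $\bar{\vect w}=\sum_j r_jA^{(j)}\vect w^j$. The hypothesis now reads $\cL[\vect w]\le\cL[\bar{\vect w},\tilde H]$, i.e.\ $\bar{\vect w}$ thermo-majorizes $\vect w$; by the thermo-majorization theorem there is a single TO $B$ with $\vect w=B\bar{\vect w}$. Composing yields $\vect w=\sum_j r_j(BA^{(j)})\vect w^j$, and since each $T^{(j)}:=BA^{(j)}$ is a composition of TO and hence a TO, this is precisely the decomposition~(\ref{vru}) demanded by~(1).

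The main obstacle, and the reason Lemma~\ref{lemlm} is indispensable, is that the naive approach using only Lemma~\ref{lemlc} fails: the convexity inequality $\cL[\sum_j r_j\tilde{\vect w}^j]\le\sum_j r_j\cL[\tilde{\vect w}^j]$ points the wrong way, so a generic embedding of the $\vect w^j$ could yield an averaged state whose Lorenz curve dips \emph{below} $\cL[\vect w]$ even when $\sum_j r_j\cL[\vect w^j]$ lies above it. Lemma~\ref{lemlm} circumvents this by choosing $\tilde H$ so that every bend abscissa of every $\cL[\vect w^j]$ becomes a Gibbs-block boundary of $\tilde H$; this forces the ratios $\tilde w^j_i/\tilde g_i$ to be co-monotonically ordered across all $j$, so the averaged state inherits the same ordering and its bends add up exactly, producing equality rather than an inequality. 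I would also take care to verify that the comparison $\cL[\vect w]\le\cL[\bar{\vect w},\tilde H]$ is legitimate across the differing Hamiltonians $H$ and $\tilde H$ — it is, because both thermo-majorization and TO-convertibility depend on state and Hamiltonian only through the Lorenz curve, as recorded in the appendix — and that each composite $T^{(j)}=BA^{(j)}$ is genuinely a $\vect g$-preserving column-stochastic map on the original system.
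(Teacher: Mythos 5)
Your proposal is correct and follows essentially the same route as the paper's own proof: the forward direction via Lemma~\ref{lemlc} (convexity) plus TO monotonicity of the Lorenz curve, and the reverse direction via Lemma~\ref{lemlm} to embed the $\vect w^j$ into a common Hamiltonian $\tilde H$ where the Lorenz curves add exactly, followed by TO-equivalence of equal-Lorenz-curve states and composition with a single TO realizing $\bar{\vect w}\tho\vect w$. Your added remarks on why the naive convexity bound points the wrong way and on the legitimacy of cross-Hamiltonian comparisons are consistent with, and slightly more explicit than, the paper's treatment.
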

\begin{proof}[Proof $\Rightarrow$]
Assume that condition \ref{eth} holds. Consider the Lorenz curve of $\vect w$:
\be
\cL\left[\vect w\right]=\cL\left[\sum_{j=1}^nr_jT^{(j)}\vect w^j\right].
\ee
Lemma~\ref{lemlc} implies that
\be\label{tlcon}
\cL\left[\vect w\right]\le\sum_{j=1}^nr_j\cL\left[T^{(j)}\vect w^j\right].
\ee
But since each $T^{(j)}$ is a TO, the thermo-majorization condition for TO convertibility implies $\cL\left[T^{(j)}\vect w^j\right]\le\cL\left[\vect w^j\right]$.
\end{proof}
\begin{proof}[Proof $\Leftarrow$]
Assume that condition \ref{con2l} holds. First, by Lemma~\ref{lemlm}, for each $j$ there exists a TO $\tilde T^{(j)}$ that maps $\left(\vect w^j,H\right)$ to a $\left(\tilde{\vect w}^j,\tilde H\right)$ defined as in the lemma. Note that we have had to allow a change of Hamiltonian in this process. In a subsequent step we will be able to get back to the original Hamiltonian, resulting in an overall process that fits within our fixed-Hamiltonian formalism.

Next, we note using the same lemma that the state $\bar{\vect w}:=\sum_jr_j\tilde{\vect w}^j$ (under $\tilde H$) thermo-majorizes $\left(\vect w,H\right)$ by assumption. Therefore, there exists some TO $T$ mapping $(\bar{\vect w},\tilde H)$ to $(\vect w,H)$. Since the composition of two TO is also a TO, we can construct TO $T^{(j)}:=T\circ\tilde  T^{(j)}$ that satisfy condition \ref{eth}.
\end{proof}
Combining this result with Corollary~\ref{ptq} leads directly to Proposition~\ref{plqm} of the main matter. Proposition~\ref{plqm} implies that every instance of the CTO convertibility problem is the feasibility problem of a linear program of instance size $\cO(md)$, where $d$ is the dimensionality of S and $m$ that of Y. Converting the CTO problem statement (specified in terms of the matrices $U$ and $V$) to the corresponding linear program involves constructing the Lorenz curves of the $\vect u^{x}$, which in turn requires the calculation of each ${u}_i^{x}/g_i$, followed by sorting and interpolation; for $V$ we don't need the entire curves, only the positions of the bends. All these computations can be performed efficiently in practice; existing algorithms for linear optimization (of which feasibility problems are particularly simple cases) perform in time that scales cubically in the instance size. Therefore, overall, we have the convertibility condition in a form amenable to efficient computation.

The condition (\ref{lorcom}) runs over $s\in[0,1]$, but we don't really need to check for all values of $s$. Note that $\cL(\vect w)$ for any $\vect w$ is continuous and piecewise linear, with at most $d-1$ ``bends''. The possible horizontal coordinates (abscissae) where Lorenz curves bend are finite in number, and determined by $\vect g$. They form the set
\be
\sigma:=\left\{s=\sum_{i=1}^kg_{\pi(i)}:k\le d-1,\pi\in S_d\right\},
\ee
where $S_d$ is the group of permutations of $(1,2\dots,d)$. This set has size $|\sigma|=\tilde{d}-1$ with $\tilde{d}\le(d-1)d!+1$. Therefore, we have:
\begin{obs}\label{obedn}
In order to verify any instance of thermo-majorization relative to $\vect g$, we are required to compare the Lorenz curves of the two vectors only at the $\tilde d-1$ abscissae $s\in\sigma$.
\end{obs}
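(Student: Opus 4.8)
The plan is to exploit the piecewise-linear and concave structure of Lorenz curves, reducing the infinite family of inequalities $\cL[\vect u](s)\ge\cL[\vect v](s)$, $s\in[0,1]$, to a finite check at the state-independent test set $\sigma$. First I would recall that for any state $\vect w$, once the energy levels are indexed so that $w_1/g_1\ge w_2/g_2\ge\dots\ge w_d/g_d$, the curve $\cL[\vect w]$ is the piecewise-linear interpolation through the points $\bigl(\sum_{i=1}^k g_i,\,\sum_{i=1}^k w_i\bigr)$. Hence every bend of $\cL[\vect w]$ sits at an abscissa of the form $\sum_{i=1}^k g_{\pi(i)}$ with $k\le d-1$ and $\pi$ the sorting permutation. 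Since $\sigma$ is defined by ranging over \emph{all} $\pi\in S_d$, it contains the bend abscissae of $\cL[\vect w]$ for every conceivable state $\vect w$; in particular it contains all bends of both $\cL[\vect u]$ and $\cL[\vect v]$, whatever those states happen to be.

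Next I would pass to the difference $f(s):=\cL[\vect u](s)-\cL[\vect v](s)$. Because both $\cL[\vect u]$ and $\cL[\vect v]$ are affine on each interval between consecutive points of $\sigma\cup\{0,1\}$, so is $f$: it is piecewise affine with all breakpoints contained in $\sigma\cup\{0,1\}$. The elementary fact that an affine function on a closed interval attains its minimum at one of the two endpoints then gives, on each subinterval $[s_{i-1},s_i]$, that $f\ge0$ holds throughout the subinterval iff it holds at the two endpoints $s_{i-1}$ and $s_i$. Stitching these local statements together across all subintervals, $f\ge0$ on all of $[0,1]$ if and only if $f\ge0$ at every breakpoint in $\sigma\cup\{0,1\}$.

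It remains to dispose of the two endpoints $s=0$ and $s=1$. For normalized states both Lorenz curves begin at $(0,0)$ and end at $(1,1)$, so $f(0)=f(1)=0$ and the inequality is automatic at both ends. Thus the only breakpoints that can be binding are the interior ones, that is, the $|\sigma|=\tilde d-1$ elements of $\sigma$, which yields the claim: $\cL[\vect u]\ge\cL[\vect v]$ on $[0,1]$ precisely when $\cL[\vect u](s)\ge\cL[\vect v](s)$ for every $s\in\sigma$. The one point deserving care---the conceptual crux rather than a technical obstacle---is the universality argument of the first paragraph: although the bend locations of a single Lorenz curve depend on the state through the ordering of the ratios $w_i/g_i$, the set $\sigma$ is fixed by $\vect g$ alone and simultaneously accommodates every ordering, so a single finite collection of test abscissae suffices uniformly over all pairs of states.
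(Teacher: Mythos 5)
Your argument is correct and matches the paper's own (much terser) justification: both rest on the fact that every Lorenz curve relative to $\vect g$ is continuous and piecewise linear with all bends at partial sums of permuted $g_i$'s, so that the difference of two such curves is piecewise affine with breakpoints in $\sigma\cup\{0,1\}$ and its sign need only be checked there. Your explicit handling of the endpoints $s=0,1$ and the ``affine functions attain extrema at endpoints'' step simply fills in details the paper leaves implicit.
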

Index the elements of $\sigma$ as $\sigma_i$, such that $\sigma_1<\sigma_2\dots<\sigma_{\tilde{d}-1}$. Also define $\sigma_0:=0$ and $\sigma_{\tilde{d}}:=1$.

In addition to their continuity and piecewise linearity, which led to Observation~\ref{obedn}, Lorenz curves are also concave and monotonously non-decreasing. Therefore:
\begin{obs}\label{obens}
In order to determine whether $\vect u$ thermo-majorizes $\vect v$, we are required to compare the Lorenz curves of the two vectors only at the abscissae where $\cL\left[\vect v\right]$ bends.
\end{obs}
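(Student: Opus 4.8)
The plan is to leverage two properties of Lorenz curves noted just above the statement: $\cL[\vect v]$ is piecewise linear, and $\cL[\vect u]$ is concave. One direction is trivial—if $\cL[\vect u](s)\ge\cL[\vect v](s)$ for every $s\in[0,1]$, the inequality certainly survives restriction to the bend abscissae of $\cL[\vect v]$. All the work is in the converse, so I would assume the comparison holds at each abscissa where $\cL[\vect v]$ bends and upgrade this to the full condition $\cL[\vect u]\ge\cL[\vect v]$.

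First I would list the bend abscissae of $\cL[\vect v]$, augmented by the endpoints, as $0=t_0<t_1<\dots<t_k<t_{k+1}=1$. By piecewise linearity, $\cL[\vect v]$ is affine on each subinterval $[t_i,t_{i+1}]$, so for $s=\lambda t_i+(1-\lambda)t_{i+1}$ with $\lambda\in[0,1]$ we have $\cL[\vect v](s)=\lambda\cL[\vect v](t_i)+(1-\lambda)\cL[\vect v](t_{i+1})$.

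Next I would bring in concavity of $\cL[\vect u]$, which gives $\cL[\vect u](s)\ge\lambda\cL[\vect u](t_i)+(1-\lambda)\cL[\vect u](t_{i+1})$ on the same subinterval. Applying the assumed inequality at the two endpoints $t_i$ and $t_{i+1}$ and then using the affine expression for $\cL[\vect v](s)$, the right-hand side is bounded below by $\cL[\vect v](s)$. Chaining the estimates yields $\cL[\vect u](s)\ge\cL[\vect v](s)$ throughout $[t_i,t_{i+1}]$; as these subintervals tile $[0,1]$, the inequality holds for all $s$, which is exactly thermo-majorization.

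The one point I would handle with care—and the only place the argument could slip—is the role of the two endpoints $t_0=0$ and $t_{k+1}=1$, which are needed to close off the first and last subintervals yet are not themselves bends of $\cL[\vect v]$. These require no extra checking: every Lorenz curve passes through the origin, and for normalized states through the top-right corner, so the required inequality degenerates to an equality at both. Hence comparing the curves only at the bends of $\cL[\vect v]$ is indeed sufficient, and the observation follows. Note that concavity of $\cL[\vect u]$ is the operative property here; monotonicity, though also true, is not actually needed for this reduction.
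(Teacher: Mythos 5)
Your argument is correct and matches the paper's reasoning: the paper simply asserts this observation from the continuity, piecewise linearity, concavity and monotonicity of Lorenz curves, and your segment-by-segment interpolation argument (concavity of $\cL[\vect u]$ against affineness of $\cL[\vect v]$ between consecutive bends) is precisely the fleshed-out version of that. Your side remarks are also accurate: the endpoints $s=0,1$ are automatic for normalized states, and monotonicity is indeed not needed for this particular reduction.
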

Now let $U$ and $V$ be two given resources. Similarly to $\sigma_i$, define $s_i$ ($i\in\{0,2\dots,D\}$) based on only the bends of $V$ (as described in the main matter). Note that $\{s_1,s_2\dots,s_D\}\subseteq\sigma$.

Consider some general set $\tilde\sigma=\left\{\tilde\sigma_1,\tilde\sigma_2\dots,\tilde\sigma_{\tilde{D}}\right\}$ of abscissae such that $\tilde{D}\ge D$ and $\{s_1,s_2\dots,s_D\}\subseteq\tilde\sigma$. In the following, we prove results that are valid for any choice of $\tilde\sigma$. Our proofs will therefore naturally apply to the special cases $\{s_i\}$ and $\sigma$.

\begin{prop}[General case of Proposition~\ref{lempq} of main matter] For any pair $(U,V)$, define the normalized bipartite distributions ${P}\equiv\left(\vect p^{1},\vect p^{2}\dots,\vect p^{\ell}\right)$ and ${Q}\equiv\left(\vect q^{1},\vect q^{2}\dots,\vect q^{m}\right)$ through
\begin{align}
p_{ix}&=\cL\left[\vect u^{x}\right](\tilde\sigma_i)-\cL\left[\vect u^{x}\right](\tilde\sigma_{i-1});\nonumber\\
q_{iy}&=\cL\left[\vect v^{y}\right](\tilde\sigma_i)-\cL\left[\vect v^{y}\right](\tilde\sigma_{i-1})
\end{align}
for $i\in\{1,2\dots,\tilde D\}$. Then,
\be
U\ct V\;\Longleftrightarrow\;\exists\textnormal{ row-stochastic }{R}:L{P}{R}\ge L{Q},
\ee
where the inequality is entriwise, and $L$ is the $\tilde D\times\tilde D$ matrix given by
\be
L=\left(\begin{array}{cccc}1&0&\dots&0\\
1&\ddots&\ddots&\vdots\\
\vdots&\ddots&\ddots&0\\
1&\dots&1&1
\end{array}\right).
\ee
\label{alpq}
\end{prop}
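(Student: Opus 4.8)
The plan is to reduce everything to Proposition~\ref{plqm}, which already establishes that $U\ct V$ if and only if there is a row-stochastic $R=(r_{xy})$ with $\sum_{x=1}^\ell r_{xy}\cL[\vect u^{x}](s)\ge\cL[\vect v^{y}](s)$ for every $y$ and every $s\in[0,1]$. The entire content of the present proposition is then (a) a bookkeeping identity rewriting the matrix inequality $LPR\ge LQ$ as this same family of Lorenz inequalities, but evaluated only at the discrete abscissae $\tilde\sigma_k$, together with (b) an argument that testing the inequalities on the finite grid $\tilde\sigma$ is equivalent to testing them on the whole continuum $[0,1]$. Since the existential quantifier over $R$ is the same on both sides, once (a) and (b) are in place the equivalence follows immediately.

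For (a), I would exploit the fact that $L$ sums rows: because $L_{ki}=1$ exactly when $i\le k$, a telescoping gives $(LP)_{kx}=\sum_{i=1}^k p_{ix}=\cL[\vect u^{x}](\tilde\sigma_k)-\cL[\vect u^{x}](\tilde\sigma_0)=\cL[\vect u^{x}](\tilde\sigma_k)$, where I use $\tilde\sigma_0:=0$ and the fact that every Lorenz curve passes through the origin. Identically, $(LQ)_{ky}=\cL[\vect v^{y}](\tilde\sigma_k)$. Consequently $(LPR)_{ky}=\sum_{x=1}^\ell r_{xy}\cL[\vect u^{x}](\tilde\sigma_k)$, and the entrywise inequality $LPR\ge LQ$ is precisely the statement that $\sum_{x} r_{xy}\cL[\vect u^{x}](\tilde\sigma_k)\ge\cL[\vect v^{y}](\tilde\sigma_k)$ holds for all $k\in\{1,\dots,\tilde D\}$ and all $y$. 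Along the way one notes that the $p_{ix}$ and $q_{iy}$ are nonnegative, being increments of non-decreasing Lorenz functions, and sum to $1$, confirming that $P$ and $Q$ are genuine normalized bipartite distributions.

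Part (b) is the crux. For a fixed $R$, the implication from the continuum condition to the grid condition is trivial, since each $\tilde\sigma_k\in[0,1]$. For the converse I would introduce, for each $y$, the averaged source curve $g_y(s):=\sum_{x=1}^\ell r_{xy}\cL[\vect u^{x}](s)$. As a nonnegative linear combination of Lorenz functions, $g_y$ is continuous, piecewise linear, and---by the concavity property underlying Lemma~\ref{lemlc}---concave. On the other hand, each target curve $\cL[\vect v^{y}]$ is piecewise linear with all of its bends contained in $\tilde\sigma$ (this is guaranteed by the defining inclusion $\{s_1,\dots,s_D\}\subseteq\tilde\sigma$), so it is affine on every subinterval $[\tilde\sigma_{i-1},\tilde\sigma_i]$. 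If $g_y\ge\cL[\vect v^{y}]$ at both endpoints of such a subinterval, then, writing an interior point as a convex combination of the endpoints, the concavity of $g_y$ (which lies above its chords) together with the affineness of $\cL[\vect v^{y}]$ on the interval forces $g_y\ge\cL[\vect v^{y}]$ throughout. Ranging over all subintervals and all $y$ upgrades the grid condition to the continuum condition.

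The main obstacle to watch is exactly this last step: the averaged curve $g_y$ is in general sub-normalized and is not itself the Lorenz curve of any bona fide state, so Observation~\ref{obens}---which is phrased for comparing two genuine Lorenz curves---does not apply verbatim. The point to make is that the argument behind that observation uses only concavity of the dominating curve and piecewise-linearity (with bends in the grid) of the dominated one, both of which survive the passage to the nonnegative combination $g_y$. The only other care needed is the bookkeeping guaranteeing that $\tilde\sigma$ contains $0$, $1$, and all bends of every $\cL[\vect v^{y}]$, so that the grid is fine enough for the concavity argument to bite.
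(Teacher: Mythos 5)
Your proof is correct and follows essentially the route the paper takes (the paper leaves this proposition's proof implicit, relying on Proposition~\ref{plqm} together with Observations~\ref{obedn} and~\ref{obens}); your explicit handling of the averaged curve $g_y$, and your remark that Observation~\ref{obens} does not apply verbatim because $g_y$ is not itself a Lorenz curve of a state, fills in exactly the detail the paper glosses over. One small correction: the concavity of $g_y$ in $s$ comes from the concavity of each individual Lorenz curve as a function of $s$ (a consequence of the sorted indexing~(\ref{thL})), not from Lemma~\ref{lemlc}, which concerns convexity in the \emph{state} argument.
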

One of the corollaries of Proposition~\ref{lempq} was the following, which we now prove:
\begin{coro}[Corollary~\ref{coro5} of main matter]
Given athermality resources $(\vect u,\vect v)$ such that $\vect u\tho\vect v$ is possible, $U\equiv\left(p\vect u,[1-p]\vect g\right)$ can be converted to $V\equiv\vect v$ by CTO if and only if $p\ge p_{\min}$, where
\be
p_{\min}=\max_{i\in\{1\dots,\tilde D-1\}}\left[\fr{\cL[\vect v](\tilde\sigma_i)-\tilde\sigma_i}{\cL[\vect u](\tilde\sigma_i)-\tilde\sigma_i}\right].
\ee
\end{coro}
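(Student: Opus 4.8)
The plan is to specialize the ensemble-to-state convertibility criterion to the two-setting source and then solve the resulting family of linear inequalities for the threshold in $p$. Here $U\equiv\left(p\vect u,[1-p]\vect g\right)$ is an $\ell=2$, $m=1$ instance with $p_1=p$, $\vect u^{|1}=\vect u$, $p_2=1-p$, $\vect u^{|2}=\vect g$, and $\vect v^{|1}=\vect v$. The $m=1$ case of Proposition~\ref{alpq} (equivalently Corollary~\ref{corom}, in its general-$\tilde\sigma$ form) forces the row-stochastic $R$ to have all entries equal to $1$, so the criterion collapses to $\sum_x p_x\cL[\vect u^{|x}](s)\ge\cL[\vect v](s)$ for all $s$, i.e. $p\,\cL[\vect u](s)+(1-p)\,\cL[\vect g](s)\ge\cL[\vect v](s)$. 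First I would record the elementary fact that $\cL[\vect g](s)=s$: since every ratio $g_i/g_i$ equals $1$, the marked points of the Gibbs Lorenz curve all lie on the diagonal, so its Lorenz function is the identity. Substituting and rearranging yields the equivalent family
\be\label{threshineq}
p\left(\cL[\vect u](s)-s\right)\ge\cL[\vect v](s)-s\qquad\forall s\in[0,1].
\ee

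Next I would argue that it suffices to impose \eqref{threshineq} only at the interior bend abscissae $\tilde\sigma_i$, $i\in\{1,\dots,\tilde D-1\}$. The averaged Lorenz function $s\mapsto p\,\cL[\vect u](s)+(1-p)s$ is concave, being a convex combination of the concave $\cL[\vect u]$ and the linear $s$, and it passes through $(0,0)$ and $(1,1)$. By the concavity argument underlying Observation~\ref{obens} (a concave upper curve dominates a second curve everywhere iff it does so at the bends of the lower curve), the inequality $p\,\cL[\vect u]+(1-p)s\ge\cL[\vect v]$ holds for all $s$ iff it holds at the bends of $\cL[\vect v]$. These bends lie in $\tilde\sigma$; and for any extra, non-bend abscissa of $\tilde\sigma$ the corresponding ratio cannot exceed the maximum over the true bends, so extending the maximization to all of $\{\tilde\sigma_i\}$ leaves its value unchanged. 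At $\tilde\sigma_0=0$ and $\tilde\sigma_{\tilde D}=1$ both sides of \eqref{threshineq} vanish, which is why these endpoints are omitted.

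Finally I would solve \eqref{threshineq} at each interior bend. Wherever $\cL[\vect u](\tilde\sigma_i)-\tilde\sigma_i>0$, the inequality is equivalent to $p\ge\bigl(\cL[\vect v](\tilde\sigma_i)-\tilde\sigma_i\bigr)/\bigl(\cL[\vect u](\tilde\sigma_i)-\tilde\sigma_i\bigr)$, and taking the maximum over $i$ gives $p_{\min}$. The main obstacle is the degenerate case $\cL[\vect u](\tilde\sigma_i)=\tilde\sigma_i$, where the ratio reads $0/0$. Here the hypothesis $\vect u\tho\vect v$ rescues the argument: it yields $\cL[\vect u]\ge\cL[\vect v]\ge s$ (the second inequality because every normalized state thermo-majorizes $\vect g$), so $\cL[\vect u](\tilde\sigma_i)=\tilde\sigma_i$ forces $\cL[\vect v](\tilde\sigma_i)=\tilde\sigma_i$; the constraint degenerates to the vacuous $0\ge0$ and such indices are simply dropped from the maximization. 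The same hypothesis shows every ratio lies in $[0,1]$, whence $p_{\min}\le1$, so the threshold is attainable and both directions of the equivalence follow.
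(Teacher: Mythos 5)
Your proposal is correct and follows essentially the same route as the paper: specialize the $m=1$ ensemble-to-state criterion (Corollary~\ref{corom}), use $\cL[\vect g](s)=s$ to rearrange into $p\left(\cL[\vect u](\tilde\sigma_i)-\tilde\sigma_i\right)\ge\cL[\vect v](\tilde\sigma_i)-\tilde\sigma_i$, and divide. You supply more detail than the paper's terse final step---notably the justification for restricting to the bend abscissae and the careful treatment of the degenerate $0/0$ indices using the hypothesis $\vect u\tho\vect v$---all of which is sound.
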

\begin{proof}
From Corollary~\ref{corom}, we have that $U\ct V$ if and only if
\be
p\cL[\vect u](\tilde\sigma_i)+(1-p)\cL\left[\vect g\right](\tilde\sigma_i)\ge\cL[\vect v](\tilde\sigma_i)\quad\forall i.
\ee
Noting that $\cL\left[\vect g\right](s)=s$, the above can be rephrased as
\be
p\left(\cL[\vect u](\tilde\sigma_i)-\tilde\sigma_i\right)\ge\cL[\vect v](\tilde\sigma_i)-\tilde\sigma_i\quad\forall i.
\ee
This leads to the claimed result.
\end{proof}

Using Proposition~\ref{alpq}, we find a connection with a relation called lower-triangular (LT) majorization \cite{MOA,NG15}. For two $\tilde D$-dimensional probability distributions $\vect p$ and $\vect q$,  we say $\vect p\ltM\vect q$ (``$\vect p$ LT-majorizes $\vect q$'') if there exists a ${\tilde D}\times {\tilde D}$ LT column-stochastic (LTCS) matrix $\Theta$ such that $\vect q=\Theta\vect p$. It can be shown easily that $\vect p\ltM\vect q$ is equivalent to $L\vect p\ge L\vect q$ componentwise. For this reason LT majorization is also called unordered majorization, alluding to the fact that usual majorization is defined similarly through partial sums but after the vector components have been reordered in nonincreasing order.

Coming to joint distributions, for a given ${R}$ the condition $L{P}{R}\ge L{Q}$ is equivalent to the condition that each column of ${P}{R}$ LT-majorize the corresponding column of ${Q}$. The condition ``there exists a row-stochastic ${R}$ such that $L{P}{R}\ge L{Q}$'' defines a preorder (reflexive and transitive binary relation) on the set of joint distributions with ${\tilde D}$ rows. Following Ref.~\cite{CUP15}, we will denote this as ${P}\rhd_\mathrm{c}{Q}$ (``${P}$ conditionally LT-majorizes ${Q}$''). For brevity, we denote by $\bbR^{{\tilde D}\times\ell}_{+,1}$ the set of all normalized ${\tilde D}\times\ell$ joint distributions. For a given ${P}\in\bbR^{{\tilde D}\times\ell}_{+,1}$, define
\be
\cM({P},k):=\left\{{Q}'\in\bbR^{{\tilde D}\times k}_{+,1}:{Q}'\lhd_\mathrm{c}{P}\right\},
\ee
which is called the markotope. Note that it is a compact convex set; this follows from the fact that the set of ${\tilde D}\times {\tilde D}$ LTCS matrices is convex and compact, as is the set of $\ell\times k$ row-stochastic matrices.
\begin{alem}
Given ${P}\in\bbR^{{\tilde D}\times\ell}_{+,1}$ and ${Q}\in\bbR^{{\tilde D}\times m}_{+,1}$,
\be
{P}\rhd_\mathrm{c}{Q}\iff\cM({P},k)\supseteq\cM({Q},k)\;\forall k\in\bbN.
\ee
\end{alem}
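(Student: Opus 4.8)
The plan is to observe that $\cM(P,\cdot)$ is nothing but the \emph{down-set} of $P$ under the preorder $\rhd_\mathrm{c}$, graded by the number $k$ of columns: by definition $\cM(P,k)=\{Q'\in\bbR^{\tilde D\times k}_{+,1}:P\rhd_\mathrm{c}Q'\}$. The lemma is then the standard characterization of a preorder through its down-sets, and the whole proof rests on the two defining properties of $\rhd_\mathrm{c}$ already established when the relation was introduced, namely reflexivity and transitivity.

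For the forward implication I would assume $P\rhd_\mathrm{c}Q$, fix an arbitrary $k\in\bbN$, and take any $Q'\in\cM(Q,k)$, i.e.\ $Q\rhd_\mathrm{c}Q'$. Transitivity of $\rhd_\mathrm{c}$ immediately yields $P\rhd_\mathrm{c}Q'$, so $Q'\in\cM(P,k)$. As $Q'$ and $k$ were arbitrary, this gives $\cM(Q,k)\subseteq\cM(P,k)$ for every $k$.

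For the converse I would assume the family of inclusions and specialize to $k=m$, the number of columns of $Q$. Reflexivity of $\rhd_\mathrm{c}$ gives $Q\rhd_\mathrm{c}Q$ (take $R=I_m$, so that $LQR=LQ\ge LQ$), whence $Q\in\cM(Q,m)$. The hypothesized inclusion $\cM(Q,m)\subseteq\cM(P,m)$ then forces $Q\in\cM(P,m)$, which is precisely $P\rhd_\mathrm{c}Q$. In fact this shows that the single inclusion at $k=m$ already captures the relation; the stronger ``for all $k$'' statement then holds by the forward direction.

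I expect no real obstacle here: the only genuine content is the recognition that $\cM$ is a down-set, after which reflexivity and transitivity do all the work. For completeness one should confirm these two properties of $\rhd_\mathrm{c}$, but transitivity is immediate since, given $LPR_1\ge LQ$ and $LQR_2\ge LQ'$, right-multiplying the first inequality by the nonnegative matrix $R_2$ preserves the entrywise ordering to give $LP(R_1R_2)\ge LQR_2\ge LQ'$, with $R_1R_2$ again row-stochastic. The compactness and convexity of the markotope play no role in this lemma; they are reserved for the subsequent results relating $\rhd_\mathrm{c}$ to the supporting functionals $\omega_A$.
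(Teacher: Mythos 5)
Your proof is correct and follows exactly the route the paper implicitly relies on: the paper states (just before defining the markotope) that $\rhd_\mathrm{c}$ is a preorder, and leaves the lemma unproven precisely because it is the standard down-set characterization of a preorder, with the forward direction by transitivity and the converse by reflexivity at $k=m$. Your explicit verification of transitivity (right-multiplying $LPR_1\ge LQ$ by the nonnegative row-stochastic $R_2$) and reflexivity (taking $R=I_m$) is sound, so there is nothing to add.
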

Let  $\cS_{\cM({P},m)}:\bbR^{{\tilde D}\times k}\to\bbR$ be the support function of the markotope, defined by
\be
\cS_{\cM({P},k)}(A):=\max\left\{\Tr(A^T{Q}'):{Q}'\in\cM({P},k)\right\}.
\ee
Support functions of non-empty compact convex sets have the property that $\cM({Q},k)\subseteq\cM({P},k)$ if and only if
\be
\cS_{\cM({Q},k)}(A)\leq\cS_{\cM({P},k)}(A)\;\forall A\in\bbR^{{\tilde D}\times k}_{+,1}.
\ee
From the last observation, the support function provides a characterization of conditional LT majorization. For a given ${P}\equiv\left(\vect p^{1},\vect p^{2}\dots,\vect p^{\ell}\right)$ and $A\equiv\left(\vect a^1,\vect a^2\dots,\vect a^k\right)$, the calculation of $\cS_{\cM({P},k)}(A)$ can be simplified as follows. Using the above insights on LT majorization, each ${Q}'\in\cM({P},k)$ can be written as $\left(\Theta^{(1)}{P}\vect{R}^1,\Theta^{(2)}{P}\vect{R}^2\dots,\Theta^{(k)}{P}\vect{R}^k\right)$, with ${R}\equiv\left(\vect{R}^1,\vect{R}^2\dots,\vect{R}^k\right)$ row-stochastic and each $\Theta^{(y)}$ LTCS. Therefore,
\begin{align}
\cS_{\cM({P},k)}(A)&=\max_{\vect\Theta,{R}}\sum_{y=1}^k\sum_{i,j=1}^{\tilde D}\sum_{x=1}^\ell a_{iy}\Theta^{(y)}_{ij}{p}_{jx}{R}_{xy}\nonumber\\
&=\sum_{x=1}^{\ell}\max_y\sum_{j=1}^{\tilde D}{p}_{jx}\max_{i\ge j}a_{iy}.
\end{align}
In the second line we used the structure of LTCS matrices. Note that $\max_{i\ge j}a_{iy}$ is a nonincreasing sequence in $j$. Therefore, it suffices to consider $A$ belonging to the set
\be
\bbR^{{\tilde D}\times k}_{+,1,\downarrow}=\left\{A\in\bbR^{{\tilde D}\times k}_{+,1}:\forall j,\;a_{1j}\ge a_{2j}\dots\ge a_{{\tilde D}j}\right\},
\ee
in which case for any ${\tilde D}$-dimensional distribution $\vect p$,
\be
\max_y\sum_{j=1}^{\tilde D}{p}_j\max_{i\ge j}a_{iy}=\max_y\vect p\cdot\vect a^y=:\omega_A(\vect p).
\ee
\begin{coro}
For ${P}\in\bbR^{{\tilde D}\times\ell}_{+,1}$ and ${Q}\in\bbR^{{\tilde D}\times m}_{+,1}$, ${P}\rhd_\mathrm{c}{Q}$ if and only if, for all $k\in\bbN$,
\be\label{pphip}
\sum_{x=1}^{\ell}\omega_A\left(\vect p^{x}\right)\ge\sum_{y=1}^{m}\omega_A\left(\vect q^{y}\right)\;\forall A\in\bbR^{{\tilde D}\times k}_{+,1,\downarrow}.
\ee
\end{coro}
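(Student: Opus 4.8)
The plan is to obtain the Corollary as a chain of equivalences assembled entirely from the results already in hand: the markotope-containment characterization of $\rhd_\mathrm{c}$ supplied by the preceding Lemma, the support-function criterion for inclusion of compact convex sets, and the closed-form evaluation of the support function $\cS_{\cM({P},k)}$. First I would fix $k\in\bbN$ and use the Lemma to rewrite ${P}\rhd_\mathrm{c}{Q}$ as the containment $\cM({Q},k)\subseteq\cM({P},k)$. Because each markotope was noted to be a nonempty compact convex set, the quoted support-function property converts this containment into the family of scalar inequalities $\cS_{\cM({Q},k)}(A)\le\cS_{\cM({P},k)}(A)$ ranging over $A\in\bbR^{{\tilde D}\times k}_{+,1}$. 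Up to this point the argument is pure citation of the assembled facts.

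The step that needs care is narrowing the test set from all of $\bbR^{{\tilde D}\times k}_{+,1}$ to the downward-ordered subset $\bbR^{{\tilde D}\times k}_{+,1,\downarrow}$. Here I would exploit the fact, visible in the support-function formula, that $\cS_{\cM({P},k)}(A)$ depends on $A$ only through the column-wise running maxima $\hat a_{jy}:=\max_{i\ge j}a_{iy}$, which are automatically nonincreasing in $j$; hence $\cS_{\cM({P},k)}(A)=\cS_{\cM({P},k)}(\hat A)$ and likewise for ${Q}$, with $\hat A$ downward-ordered. The one subtlety is that $\hat A$ need not be normalized, so it is not literally a member of $\bbR^{{\tilde D}\times k}_{+,1,\downarrow}$. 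I would close this gap using positive homogeneity of $\omega_A$, and hence of the support function, in $A$: writing $s$ for the total mass of $\hat A$, the inequality for the normalized $\hat A/s\in\bbR^{{\tilde D}\times k}_{+,1,\downarrow}$ rescales by $s$ to the inequality for $\hat A$ itself, and the degenerate case $s=0$ (i.e.\ $A=0$) is trivially $0\ge0$. This shows the downward-ordered matrices form a sufficient test family, so the support-function inequalities over $\bbR^{{\tilde D}\times k}_{+,1}$ and over $\bbR^{{\tilde D}\times k}_{+,1,\downarrow}$ are equivalent.

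Finally, on $\bbR^{{\tilde D}\times k}_{+,1,\downarrow}$ the running maximum acts trivially, $\max_{i\ge j}a_{iy}=a_{jy}$, so the evaluated support function collapses to $\cS_{\cM({P},k)}(A)=\sum_{x=1}^\ell\max_z\vect a^z\cdot\vect p^{x}=\sum_{x=1}^\ell\omega_A(\vect p^{x})$, and identically $\cS_{\cM({Q},k)}(A)=\sum_{y=1}^m\omega_A(\vect q^{y})$. Substituting these identities into the support-function inequalities turns them into exactly~(\ref{pphip}) for that $k$; since every link in the chain is an equivalence and $k$ was arbitrary, quantifying over all $k\in\bbN$ yields the Corollary in both directions. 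I expect the only genuine obstacle to be the homogeneity-and-normalization bookkeeping in the reduction to the downward-ordered family; the remaining steps are direct invocations of the assembled lemmas.
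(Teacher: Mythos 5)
Your proposal is correct and follows essentially the same route the paper takes: markotope containment via the preceding Lemma, the support-function criterion for inclusion of compact convex sets, the closed-form evaluation $\cS_{\cM({P},k)}(A)=\sum_x\max_y\sum_j p_{jx}\max_{i\ge j}a_{iy}$, and the observation that the column-wise running maxima are nonincreasing so the test set collapses to $\bbR^{\tilde D\times k}_{+,1,\downarrow}$. Your explicit handling of the normalization of $\hat A$ via positive homogeneity is a small bookkeeping step the paper leaves implicit, but it does not change the argument.
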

\begin{obs}
Without loss of generality, we can restrict the above condition to the case of $k=m$. It is obvious that this case subsumes $k<m$. To see how it extends to $k>m$, consider some $A\equiv\left(\vect a^1,\vect a^2\dots,\vect a^k\right)\in\bbR^{{\tilde D}\times k}_{+,1,\downarrow}$. For each $y$, $\omega_A\left(\vect q^{y}\right)=\vect q^{y}\cdot\vect a^{f(y)}$ for some function $f(y)$. Define $B\equiv\fr1\alpha\left(\vect a^{f(1)},\vect a^{f(2)}\dots,\vect a^{f(m)}\right)\in\bbR^{{\tilde D}\times m}_{+,1,\downarrow}$, with $\alpha>0$ a suitable normalization factor. If (\ref{pphip}) holds for $k=m$, then $\sum_{x=1}^{\ell}\omega_B\left(\vect p^{x}\right)\ge\sum_{y=1}^{m}\omega_B\left(\vect q^{y}\right)$. But $\sum_{x=1}^{\ell}\omega_A\left(\vect p^{x}\right)\ge\alpha\sum_{x=1}^{\ell}\omega_B\left(\vect p^{x}\right)$, whereas $\sum_{y=1}^{m}\omega_A\left(\vect q^{y}\right)=\alpha\sum_{y=1}^{m}\omega_B\left(\vect q^{y}\right)$.\qed
\end{obs}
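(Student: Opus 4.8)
The plan is to treat the two cases $k<m$ and $k>m$ separately, in each reducing everything to the single hypothesis that (\ref{pphip}) holds at $k=m$. The unifying observation is that $\omega_A$ is a maximum of linear functionals indexed by the columns of $A$, so any particular evaluation $\omega_A(\vect w)$ ``activates'' only one column; by collecting the activated columns I can always pad or prune the number of columns to exactly $m$ while controlling how $\omega$ changes.

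I would first dispose of $k<m$. Given $A\in\bbR^{{\tilde D}\times k}_{+,1,\downarrow}$ with $k<m$, form an $m$-column matrix by duplicating columns (e.g.\ repeating $\vect a^k$) and renormalizing. Because $\omega$ sees the columns only through a maximum, duplicating a column changes the functional only by the overall scale that renormalization absorbs, so the $k=m$ instance of (\ref{pphip}) immediately yields the $k<m$ instance; each duplicated column remains nonincreasing, so the padded matrix still lies in $\bbR^{{\tilde D}\times m}_{+,1,\downarrow}$.

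The substantive direction is $k>m$, where I would exploit that the right-hand sum ranges over only the $m$ vectors $\vect q^y$. For each $y$ pick $f(y)$ with $\omega_A(\vect q^y)=\vect a^{f(y)}\cdot\vect q^y$; then $f$ selects at most $m$ columns of $A$, and I assemble $B:=\fr{1}{\alpha}\left(\vect a^{f(1)},\dots,\vect a^{f(m)}\right)$, with $\alpha>0$ restoring total mass one. Each column of $B$ inherits the nonincreasing property, so $B\in\bbR^{{\tilde D}\times m}_{+,1,\downarrow}$. The two key estimates are: (i) since $f(y)$ already realizes the global maximum over all $k$ columns, the maximum over the sub-collection is still attained at $\vect a^{f(y)}$ against $\vect q^y$, giving $\alpha\,\omega_B(\vect q^y)=\omega_A(\vect q^y)$; and (ii) the maximum over the sub-collection can only be smaller than the maximum over all $k$ columns, giving $\alpha\,\omega_B(\vect p^x)\le\omega_A(\vect p^x)$. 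Feeding $B$ into the $k=m$ hypothesis yields $\sum_x\omega_B(\vect p^x)\ge\sum_y\omega_B(\vect q^y)$, and chaining with (i) and (ii) produces $\sum_x\omega_A(\vect p^x)\ge\alpha\sum_x\omega_B(\vect p^x)\ge\alpha\sum_y\omega_B(\vect q^y)=\sum_y\omega_A(\vect q^y)$, as required.

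The only point needing care—and the mild obstacle in the argument—is the normalization constant $\alpha$, which equals the total mass of the selected columns and could a priori vanish if every $f(y)$ lands on a zero column of $A$. That case is harmless: if $\alpha=0$ then $\omega_A(\vect q^y)=0$ for all $y$, so the right-hand side of (\ref{pphip}) vanishes, while its left-hand side is a sum of maxima of nonnegative inner products and is therefore nonnegative, making the inequality trivial. I would thus treat $\alpha>0$ as the generic case and dispatch $\alpha=0$ separately, completing the reduction to $k=m$.
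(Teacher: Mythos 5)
Your argument is correct and follows essentially the same route as the paper's: the $k>m$ case is handled by exactly the same column-selection construction of $B=\fr1\alpha\left(\vect a^{f(1)},\dots,\vect a^{f(m)}\right)$, with the same two estimates $\alpha\,\omega_B\left(\vect q^{y}\right)=\omega_A\left(\vect q^{y}\right)$ and $\alpha\,\omega_B\left(\vect p^{x}\right)\le\omega_A\left(\vect p^{x}\right)$. You go slightly beyond the paper by making the ``obvious'' $k<m$ padding explicit and by dispatching the degenerate case $\alpha=0$, which the paper silently assumes away; both additions are sound.
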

This immediately yields:
\begin{athm}[General case of Theorem~\ref{thmon} of main matter]
Let ${P}\in\bbR^{{\tilde D}\times\ell}_{+,1}$ and ${Q}\in\bbR^{{\tilde D}\times m}_{+,1}$. Then ${P}\rhd_\mathrm{c}{Q}$ if and only if for all matrices $A\in\bbR^{{\tilde D}\times m}_{+,1,\downarrow}$,
\be
\sum_{x=1}^{\ell}\omega_A\left(\vect p^{x}\right)\ge\sum_{y=1}^{m}\omega_A\left(\vect q^{y}\right).
\ee
\end{athm}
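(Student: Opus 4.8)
The plan is to read this theorem as nothing more than the $k=m$ instance of the immediately preceding Corollary, which already characterizes $P\rhd_\mathrm{c}Q$ by the inequality $\sum_{x}\omega_A(\vect p^{x})\ge\sum_{y}\omega_A(\vect q^{y})$ holding for \emph{all} $k\in\bbN$ and all $A\in\bbR^{\tilde D\times k}_{+,1,\downarrow}$. So the only content I need to supply is the equivalence between checking this family of inequalities for every $k$ and checking it for $k=m$ alone. The forward direction is then immediate: if $P\rhd_\mathrm{c}Q$, the Corollary hands me the inequality for every $k$, in particular for $k=m$, which is exactly the stated condition.

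For the converse I would invoke the reduction carried out in the preceding Observation, whose entire purpose is to upgrade the $k=m$ hypothesis to the full family the Corollary demands. Given an arbitrary $k$ and an arbitrary $A\equiv(\vect a^1,\dots,\vect a^k)\in\bbR^{\tilde D\times k}_{+,1,\downarrow}$, the key point is that the right-hand side $\sum_y\omega_A(\vect q^{y})$ depends on $A$ only through the (at most) $m$ columns $\vect a^{f(1)},\dots,\vect a^{f(m)}$ realizing the maxima $\omega_A(\vect q^{y})=\vect q^{y}\cdot\vect a^{f(y)}$. Collecting those columns and renormalizing by a scalar $\alpha>0$ produces a matrix $B\in\bbR^{\tilde D\times m}_{+,1,\downarrow}$ to which the $k=m$ hypothesis applies, giving $\sum_x\omega_B(\vect p^{x})\ge\sum_y\omega_B(\vect q^{y})$. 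Multiplying by $\alpha$ and using the two comparisons $\omega_A(\vect p^{x})\ge\alpha\,\omega_B(\vect p^{x})$ (the maximum in $\omega_A$ ranges over strictly more columns) and $\omega_A(\vect q^{y})=\alpha\,\omega_B(\vect q^{y})$ (the maximizing column for each $\vect q^{y}$ is retained in $B$) transports the inequality back up to $A$. Since $k$ and $A$ were arbitrary, the inequality holds for all $k$, and the ``if'' half of the Corollary delivers $P\rhd_\mathrm{c}Q$.

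The conceptual heavy lifting sits upstream and is already available: the markotope reformulation $P\rhd_\mathrm{c}Q\iff\cM(P,k)\supseteq\cM(Q,k)$ for all $k$, the support-function duality turning this set inclusion into the scalar comparison $\cS_{\cM(Q,k)}\le\cS_{\cM(P,k)}$, and the explicit evaluation $\cS_{\cM(P,k)}(A)=\sum_x\omega_A(\vect p^{x})$ for $A\in\bbR^{\tilde D\times k}_{+,1,\downarrow}$. With those in hand the theorem introduces no new machinery, so the only step I would treat carefully is the asymmetry in the column-selection argument: the $P$-side acquires an \emph{inequality} while the $Q$-side acquires an \emph{equality}, and it is precisely this asymmetry that preserves the direction of $\ge$ when passing from $k$ columns down to $m$. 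Confirming that dividing by $\alpha$ leaves $B$ both normalized and column-nonincreasing, so that genuinely $B\in\bbR^{\tilde D\times m}_{+,1,\downarrow}$, is the one place an error could creep in; everything else is assembly.
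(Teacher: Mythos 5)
Your proposal is correct and follows essentially the same route as the paper: the theorem is obtained by combining the preceding Corollary (the all-$k$ characterization via markotopes and support functions) with the Observation's column-selection-and-renormalization argument that reduces arbitrary $k$ to $k=m$, including the same asymmetry where the $P$-side yields $\omega_A(\vect p^{x})\ge\alpha\,\omega_B(\vect p^{x})$ while the $Q$-side yields equality. Nothing further is needed.
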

For a general choice of $\tilde\sigma$ that is independent of $U$ and $V$, the LHS and RHS above are conditional athermality monotones. Theorem~\ref{thmon} of the main matter is a special case of this theorem, where $\tilde{\sigma}$ is chosen to be just the set $\{s_i\}$. In this case, the quantities $\omega_A$ appearing in the theorem are $V$-dependent, and therefore the theorem is stated in terms of a conversion witness instead of monotones.

We close with the following result on asymptotic convertibility of states under CTO.
\begin{prop}[Proposition~\ref{asym} of the main matter]
Given $U\equiv\left(p_1\vect u^{|1},p_2\vect u^{|2}\dots,p_\ell\vect u^{|\ell}\right)$ and $V\equiv\left(q_1\vect v^{|1},q_2\vect v^{|2}\dots,q_m\vect v^{|m}\right)$, the conversion $U^{\otimes n_1}\ct V^{\otimes n_2}$ can be carried out \emph{asymptotically reversibly}, at the optimal rate
\be
\lim_{n_1\to\infty}\fr{n_2}{n_1}=\fr{{f}(U)}{{f}(V)},
\ee
where
$${f}(U):=\sum_{x=1}^\ell p_xF_\beta\left(\vect u^{|x}\right),$$
with $F_\beta(\vect u):=\sum_{i=1}^d{u}_i\left(E_i+\beta^{-1}\ln{u}_i\right)$ the \emph{free energy function}.
\end{prop}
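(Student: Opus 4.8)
The plan is to split the statement into an achievability (direct) part and an optimality (converse) part, both built on two ingredients already in hand: the reversible interconversion of \emph{pure} athermality resources at a rate equal to the ratio of their free energies (Ref.~\cite{Reth}), and the monotonicity of $f$. For the latter, observe that $F_\beta(\vect u)=\sum_i u_i(E_i+\beta^{-1}\ln u_i)$ is convex (a linear energy term minus a concave entropy) and non-increasing under any TO (the second law, equivalently the data-processing inequality for the relative entropy $D(\cdot\,\|\,\vect g)$, to which $F_\beta$ is equal up to the additive constant $-\beta^{-1}\ln Z$). Proposition~\ref{mono} applied with $\phi=F_\beta$ then shows that $f(U)=\sum_x p_x F_\beta(\vect u^{|x})$ is a conditional athermality monotone. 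It is moreover additive, $f(U^{\otimes n})=n\,f(U)$ (energies and entropies both add over copies, and the average over the i.i.d.\ classical register factorizes), and asymptotically continuous because $F_\beta$ is continuous on the simplex. These three properties are exactly what the converse rate bound requires.

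For the direct part I would reduce a conditional resource to a standard pure athermality resource by a method-of-types / asymptotic-equipartition argument on the classical register. Given $U^{\otimes n_1}$, the register's marginal $\vect p^{\otimes n_1}$ concentrates on typical sequences in which each symbol $x$ appears about $n_1 p_x$ times. Since a CTO may condition the thermal operation on the classical register, I would let the row-stochastic $R$ sort the register into type classes; conditioned on a typical class the quantum system carries approximately $(\vect u^{|x})^{\otimes n_1 p_x}$ for each $x$ (up to a free relabeling of identical copies). Invoking the pure-athermality asymptotic result separately within each type converts this conditional content, whose total free energy is $\approx n_1\sum_x p_x F_\beta(\vect u^{|x})=n_1 f(U)$, into a single standard resource of free energy $n_1 f(U)$. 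Running the same construction in reverse (dilution) builds $V^{\otimes n_2}$ from a standard resource of free energy $n_2 f(V)$. Matching the two budgets, $n_1 f(U)=n_2 f(V)$, gives the claimed rate $n_2/n_1=f(U)/f(V)$, and executing both legs makes the conversion reversible.

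The converse then follows from the monotone alone. If $U^{\otimes n_1}\ct W$ with $W$ within vanishing distance of $V^{\otimes n_2}$, monotonicity and additivity of $f$ give $n_1 f(U)\ge f(W)$, and asymptotic continuity lets me replace $f(W)$ by $n_2 f(V)$ up to an $o(n_2)$ correction. Dividing and letting $n_1\to\infty$ yields $n_2/n_1\le f(U)/f(V)$, which together with the achievable rate pins the optimal rate exactly and, with the reverse leg of the direct part, establishes reversibility.

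I expect the main obstacle to be the direct part, specifically making the conditional conversion compatible with the CTO structure while controlling the error. Two points need care: first, the transformation $R$ must act only on $X^{\otimes n_1}$ (it is row-stochastic there), yet the routing into type classes together with the index-dependent choice of conditional TO $T^{(j)}$ must jointly realize all the per-type thermal conversions without invoking any operation forbidden under CTO; second, the typicality and smoothing estimates must be assembled so that the total conversion error vanishes as $n_1\to\infty$ while the extracted free energy still converges to $n_1 f(U)$. Reconciling the $V$-dependent bend structure used to define $(P,Q)$ with the type-class decomposition is where I would concentrate the technical effort.
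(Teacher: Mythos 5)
Your proposal is correct and follows essentially the same route as the paper: a strong-typicality argument on the classical register reducing the problem to independent per-symbol conversions, each handled by the reversible interconversion with a standard athermality resource at a rate set by $F_\beta$ from Ref.~\cite{Reth}, yielding the averaged free energy $f(U)=\sum_x p_x F_\beta(\vect u^{|x})$ as the exchange rate. Your explicit converse---noting that $F_\beta$ is a convex TO monotone so that Proposition~\ref{mono} makes $f$ an additive, asymptotically continuous CTO monotone---is a detail the paper leaves implicit, and your worry about reconciling the $V$-dependent bend structure of $(P,Q)$ is unnecessary here, since the asymptotic argument does not use that machinery at all.
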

\begin{proof}
We will use the properties of strongly typical (or letter-typical) sequences \cite{Shannon}, applied on $U\equiv U_{\rS\rX}$. Consider sampling the X part of the source $n_1$ times. Denote by $f_x$ the relative frequency of symbol $x$ in the resulting sequence. For any $\delta>0$, choose some $\epsilon\ge n_1/\delta^2$. Then, it is known from the theory of strong typicality that
\begin{equation}
\mathrm{Pr}\left[\left|f_x-p_x\right|\le\delta\sqrt{\fr{p_x(1-p_x)}{n_1}}\right]\ge1-\epsilon.
\end{equation}
By choosing $\delta(n_1)\in o\left(\sqrt{n_1}\right)$ and $\epsilon(n_1)\in o\left(n_1^0\right)$ such that $\epsilon(n_1)\in\omega\left(\delta^{-2}\right)$, we can make both $\epsilon$ and $\delta$ approach $0$ asymptotically (i.e., as $n_1\to\infty$). Therefore, in this limit, we can assume that $f_x=p_x$.

When the register X is in state $x$, the corresponding state of S is $\vect p^{|x}$. From the asymptotic resource theory of TO \cite{Reth}, it is known that any resource (i.e., any non-equilibrium state $\vect u\ne\vect g$) can be reversibly converted to a ``standard resource'' at a rate proportional to its free energy $F_\beta(\vect u)$; interconversion between arbitrary resources can be mediated by this standard resource. Similarly, under CTO, we can convert a joint state $U$ to $V$ by first converting copies of each conditional state $\vect p^{|x}$ to copies of the standard resource. The reversibility and the value of the conversion rate follow from the arguments in the previous paragraph, combined with the results of Ref.~\cite{Reth}.
\end{proof}

\end{document}